\documentclass[a4paper,10pt]{article}
\usepackage[footnotesize]{caption}
\usepackage[caption=false,font=footnotesize]{subfig}
\usepackage{mathptmx,graphicx,amsmath,amsfonts,amsthm,amssymb,url,cite}
\usepackage{a4wide}

\newcommand{\expn}{E}
\newcommand{\pr}{P}
\newcommand{\eps}{\epsilon}

\newtheorem{definition}{Definition}
\newtheorem{theorem}{Theorem}
\newtheorem{lemma}{Lemma}
\newtheorem{corollary}{Corollary}

\begin{document}

\title{Optimal symmetric Tardos traitor tracing schemes}
\author{Thijs Laarhoven\footnote{T. Laarhoven and B. de Weger are with the Department of Mathematics and Computer Science, Eindhoven University of Technology, 5612 AZ Eindhoven, The Netherlands.\protect\\
E-mail: \{t.m.m.laarhoven,b.m.m.d.weger\}@tue.nl
This work was done when the first author was with Irdeto, Eindhoven, The Netherlands. The content is mostly based on the first author's Master's thesis \cite{msc11}.} \and Benne de Weger\footnotemark[1]}
\date{\today}
\maketitle

\begin{abstract}
For the Tardos traitor tracing scheme, we show that by combining the symbol-symmetric accusation function of \v{S}kori\'{c} et al.\ with the improved analysis of Blayer and Tassa we get further improvements. Our construction gives codes that are up to $4$ times shorter than Blayer and Tassa's, and up to $2$ times shorter than the codes from \v{S}kori\'{c} et al. Asymptotically, we achieve the theoretical optimal codelength for Tardos' distribution function and the symmetric score function. For large coalitions, our codelengths are asymptotically about $4.93\%$ of Tardos' original codelengths, which also improves upon results from Nuida et al.
\end{abstract}

\section{Introduction}
\label{sec:Introduction}

Watermarking digital content allows distributors of copyrighted digital data to embed so-called fingerprints into their data in such a way that each copy of the data can be uniquely identified. These watermarks are made in a robust way, so that users cannot change or remove them from the content. If a copy of the data is then illegally distributed to unauthorized users and intercepted by the distributor, he can extract the fingerprint from the copy and find the person whose fingerprinted data was distributed. Actions can then be taken against this user, to prevent further illegal distribution.

To be able to trace the watermarked data back to the user, we need that the embedded fingerprints for each user are different. However, by comparing their differently watermarked copies of the content, multiple malicious users can form a coalition and detect differences in their content. Assuming that besides the watermarks all copies are the same, this allows coalitions to detect part of the watermark. By editing this data, they can then create a forged copy, which contains the same digital content as their original copies, but has a forged fingerprint that cannot be traced back to them directly. Under the marking assumption, which says that colluders can only detect and edit fingerprint positions if their fingerprints do not all match on that position, there are ways to construct fingerprinting schemes such that any forged copy can be traced back to at least one of the colluders. This involves finding a construction for fingerprints for each of the users, and finding a way to trace back forged copies to guilty users.

\subsection{Model}
\label{sub:Introduction-Model}

Let $U = \{1, \ldots, n\}$ denote the set of the $n$ users that received watermarked content. Here a user corresponds to one watermarked copy of the content, so a person who possesses several differently watermarked copies of the data is assumed to control multiple users. For each user $j$ the distributor generates a fingerprint (also called a codeword), which is usually denoted by $\vec{x}_j$. This codeword is a vector of length $\ell$ (the codelength) of symbols from an alphabet $Q$ of size $q$. The case $q = 2$ corresponds to the binary alphabet, which is usually taken as $Q = \{0,1\}$. All fingerprints together form the fingerprinting code $\mathcal{C} = \{\vec{x}_1, \ldots, \vec{x}_n\}$. A common way of representing this code is by putting all codewords as rows in a matrix $X$ according to $X_{ji} = (\vec{x}_j)_i$.

After assigning codewords to users and distributing the watermarked copies, a subset $C \subseteq U$ of $c$ users (called colluders or pirates) may form a coalition to create a forged copy. Using some pirate strategy $\rho$, a function $Q^{\ell \times c} \to Q^{\ell}$, they construct a forged copy, which has some unknown distorted fingerprint $\rho(X) = \vec{y}$ called the forgery. For the pirate strategy $\rho$, we assume that the marking assumption holds, i.e. if for all $j \in C$ the pirates have $(\vec{x}_j)_i = \omega$ for some position $i$ and symbol $\omega \in Q$, then the coalition is forced to output $y_i = \omega$. On other positions, we assume that colluders are free to choose any of the symbols from the alphabet.

Finally, after the coalition has created a forged copy, we assume the distributor intercepts it and extracts the forgery $\vec{y}$ from the data. He then runs some tracing algorithm $\sigma$ on the forgery, to get a subset $\sigma(\vec{y}) \subseteq U$ of users that are accused. The accusation is said to be successful if no innocent users are accused (i.e. $\sigma(\vec{y}) \subseteq C$) and at least one guilty user is accused (i.e. $\sigma(\vec{y}) \cap C \neq \emptyset$).

In the setting of probabilistic schemes, the code $X$ and the tracing algorithm $\sigma$ may depend on some random variables. The events of not accusing any innocent users (soundness) and accusing at least one guilty user (completeness) then also depend on these random variables. Then, instead of demanding that a fingerprinting scheme is always sound and complete, we may demand that the probability of failure is bounded by some small value $\eps$, where the probability is taken over these random variables. This leads to the following definitions of $\eps_1$-soundness and $\eps_2$-completeness.

\begin{definition}[Soundness and completeness] \label{def:Secureness}
Let $C \subseteq U$ be a coalition of size at most $c$, and let $\rho$ be some pirate strategy employed by this coalition. Then a traitor tracing scheme $(X, \sigma)$ is called $\eps_1$-sound if
  \begin{align*}
  \pr[\sigma(\rho(X)) \not\subseteq C] \leq \eps_1.
  \end{align*}
Similarly, a fingerprinting scheme is called $\eps_2$-complete if
  \begin{align*}
  \pr[\sigma(\rho(X)) \cap C = \emptyset] \leq \eps_2.
  \end{align*}
\end{definition}

As we will see later, $\eps_1/n$ and $\eps_2$ are closely related in the Tardos fingerprinting scheme. Therefore it is convenient to introduce the notation $\eta = \log(\eps_2)/\log(\eps_1/n)$ such that $\eps_2 = (\eps_1/n)^{\eta}$, which describes how big $\eps_2$ is, compared to $\eps_1/n$. Also, we sometimes simply say a scheme is secure, to denote that it is sound and complete for certain (implicit) parameters $\eps_1$ and $\eps_2$.

\subsection{Related work}
\label{sub:Introduction-RelatedWork}

\sloppypar{In \cite{tardos03}, Tardos investigated probabilistic binary fingerprinting schemes where small margins of error are allowed. He proved that a codelength of $\ell = \Omega(c^2 \ln(n/\eps_1))$ is necessary to achieve soundness and completeness, while in the same paper he also gave a construction with a codelength of $\ell = 100 c^2 \ln(n/\eps_1)$. This construction is often referred to as the Tardos scheme. In \cite{amiri09,huang09a} the lower bound on the codelength was further tightened, to show that one needs $\ell \geq 2 \ln(2) c^2 \ln(n/\eps_1)$ for sufficiently large $c$ and $q = 2$, to achieve soundness and completeness.}

Since the scheme of Tardos had a constant $100$ in front of the $c^2 \ln(n/\eps_1)$ in the codelength, many papers focused on constructing a scheme with the same order codelength, but with a smaller constant. For example, using a discrete distribution function in the Tardos scheme, Nuida et al.\ showed in \cite{nuida09} that one can achieve codelengths of $\ell < 5 c^2 \ln(n/\eps_1)$ in some cases with small $c$, while for large $c$ they achieved an asymptotic codelength of $\ell \approx 5.35 c^2 \ln(n/\eps_1)$. Using a different approach, Amiri and Tardos showed in \cite{amiri09} that with a computation-heavy construction, one can approach the theoretical lower bound of $\ell = 2 \ln(2) c^2 \ln(n/\eps_1)$ for large $c$.

In this paper we will focus on the binary Tardos scheme with the arcsine distribution function from \cite{tardos03}, which was introduced in \cite{tardos03} and further analyzed and improved in e.g. \cite{blayer08,nuida09,skoric08,skoric06}. We will focus on two improvements in particular. In \cite{blayer08}, Blayer and Tassa made the proofs of \cite{tardos03} tighter by introducing several auxiliary variables which were to be optimized later, instead of fixing them in advance. In that paper the construction of the Tardos scheme essentially remained the same, but it was shown that a codelength of $\ell = 85 c^2 \ln(n/\eps_1)$ is also sufficient to prove soundness and completeness. In \cite{skoric08}, \v{S}kori\'{c} et al.\ did change the scheme, by making the score function of the Tardos scheme symbol-symmetric. This also lead to shorter codelengths, giving asymptotic codelengths of $\ell = (\pi^2 + o(1)) c^2 \ln(n/\eps_1) \approx 9.87 c^2 \ln(n/\eps_1)$ for large $c$, while maintaining soundness and completeness. Furthermore assuming that the scores of innocent users and the joint coalition score are normally distributed, \v{S}kori\'{c} et al.\ showed in \cite[Section 6]{skoric08} that an asymptotic codelength of $\ell = (\frac{\pi^2}{2} + o(1)) c^2 \ln(n/\eps_1)$ is then both sufficient and necessary. Since by the Central Limit Theorem these scores will in fact converge to normal distributions for asymptotically large $c$, this also provides a lower bound on the codelength, when using the arcsine distribution function and the symmetric score function.

\subsection{Contributions and outline}
\label{sub:Introduction-Contributions}

Combining the symbol-symmetric score function from \v{S}kori\'{c} et al.\ with Blayer and Tassa's sharp analysis, we will prove $\eps_1$-soundness and $\eps_2$-completeness for all $c \geq 2$ and $\eta \leq 1$ with a codelength of $\ell = 23.79 c^2 \ln(n/\eps_1)$. This improves upon the codelength from Blayer and Tassa by a factor more than $3.5$, and it improves upon the original Tardos scheme by a factor of more than $4$. Furthermore, for bigger $c$ and smaller $\eta$ the constant in front of the $c^2 \ln(n/\eps_1)$ in $\ell$ further decreases, easily leading to a factor $10$ improvement over the original Tardos scheme and a factor slightly less than $4$ improvement over the Blayer and Tassa analysis.

Similar to work of \v{S}kori\'{c} et al., we also look at the asymptotics of our scheme, and show that for large $c$, we can prove soundness and completeness for a codelength of $\ell = (\frac{\pi^2}{2} + O(c^{-1/3})) c^2 \ln(n/\eps_1) \approx 4.93 c^2 \ln(n/\eps_1)$. This improves upon the asymptotic results from \v{S}kori\'{c} et al.\ by a factor $2$, and we achieve the asymptotic optimal codelength which \v{S}kori\'{c} et al.\ proved to be sufficient and necessary under the added assumption that the distributions of scores are normal distributions. We therefore close the gap of a factor $2$ between the best known provably secure codelength and the asymptotic optimal codelength, for Tardos' original arcsine distribution function and the symmetric score function. These results also improve upon the asymptotic codelengths from Nuida et al., who used different discrete distribution functions, by more than $7\%$.

The paper is organized as follows. In Section \ref{sec:Construction} we first give the construction of the (symmetric) Tardos scheme, and compare our results with earlier results from literature. In Sections \ref{sec:Soundness} and \ref{sec:Completeness} we then prove that the soundness and completeness properties hold under our assumptions on the parameters. In Section \ref{sec:Optimization} we then give results similar to those in \cite[Section 2.4.5]{blayer08} on how to find the optimal set of parameters that satisfies the conditions for our proof method to work, and minimizes the codelength. There we also give such minimal codelengths, for several values of $c$ and $\eta$. Finally in Section \ref{sec:Asymptotics} we prove the results stated above for asymptotically large $c$, and show that the optimal rate of convergence is of order $O(c^{-1/3})$.

This paper is mainly based on results from the first author's Master's thesis \cite{msc11}.

\section{Construction and results}
\label{sec:Construction}

First we present the construction of the Tardos traitor tracing scheme, as in \cite{blayer08}, where we use auxiliary variables $d_{\ell}, d_z, d_{\delta}$ for the codelength $\ell$, accusation offset $Z$ and cutoff parameter $\delta$ respectively. The only difference between our construction and that of Blayer and Tassa is in the score function we use. While Blayer and Tassa used the asymmetric score function from Tardos' original scheme, we use the symbol-symmetric score function from \v{S}kori\'{c} et al.

\subsection{The Tardos traitor tracing scheme}
\label{sub:Construction-Tardos}

Let $n \geq c \geq 2$ be positive integers, and let $\eps_1, \eps_2 \in (0,1)$ be the desired upper bounds for the soundness and completeness error probabilities respectively. Let us write $k = \ln(n/\eps_1)$ so that $e^{-k} = \eps_1/n$. Let $d_{\ell}, d_z, d_{\delta}$ be positive constants, with $d_{\delta} > 1$. Then the symmetric Tardos fingerprinting scheme works as follows.

\begin{enumerate}
  \item \textbf{Initialization}
  \begin{enumerate}
    \item Take the codelength as $\ell = d_{\ell} c^2 k$. \footnote{Note that $\ell$ may not be integral, while the codelength of a code of course has to be integral. See Appendix \ref{sec:IntegralCodelengths} for a short note on how to solve this minor problem in our construction.}
		\item Take the accusation offset parameter as $Z = d_z c k$.
		\item Take the cutoff parameter as $\delta = 1/(d_{\delta} c)$, and compute $\delta' = \arcsin(\sqrt{\delta})$ such that $0 < \delta' < \pi/4$.
    \item For each fingerprint position $1 \leq i \leq \ell$, select $p_i \in [\delta, 1 - \delta]$ independently from the distribution defined by the following CDF $F(p)$ and PDF $f(p)$:
      \begin{align}
      F(p) = \frac{2 \arcsin(\sqrt{p}) - 2\delta'}{\pi - 4\delta'}, \quad f(p) = \frac{1}{(\pi - 4\delta')\sqrt{p (1 - p)}}. \label{dist}
      \end{align}
    The function $f(p)$ is biased towards $\delta$ and $1 - \delta$ and symmetric around $1/2$.
  \end{enumerate}
  \item \textbf{Codeword generation}
  \begin{enumerate}
    \item For each position $1 \leq i \leq \ell$ and for each user $1 \leq j \leq n$, select the $i$th entry of the codeword of user $j$ according to $\pr[X_{ji} = 1] = p_i$ and $\pr[X_{ji} = 0] = 1 - p_i$.
  \end{enumerate}
  \item \textbf{Accusation}
  \begin{enumerate}
    \item For each position $1 \leq i \leq \ell$ and for each user $1 \leq j \leq n$, calculate the score $S_{ji}$ according to:
      \begin{align}
      S_{ji} = \begin{cases}
        +\sqrt{(1 - p_i)/p_i} & \text{if $X_{ji} = 1, y_i = 1$}, \\
        -\sqrt{p_i/(1 - p_i)} & \text{if $X_{ji} = 0, y_i = 1$}, \\
        -\sqrt{(1 - p_i)/p_i} & \text{if $X_{ji} = 1, y_i = 0$}, \\
        +\sqrt{p_i/(1 - p_i)} & \text{if $X_{ji} = 0, y_i = 0$}.
        \end{cases} \label{scoresym}
      \end{align}
    \item For each user $1 \leq j \leq n$, calculate the total accusation sum $S_j = \sum_{i = 1}^{\ell} S_{ji}$. User $j$ is accused if and only if $S_j > Z$.
  \end{enumerate}
\end{enumerate}

Under certain conditions on the parameters $d_{\ell}, d_z, d_{\delta}$, which are specified in Subsections \ref{sub:Construction-AsymTardos} and \ref{sub:Construction-SymTardos}, one can prove soundness and completeness, using (a modified version of) Tardos' proof construction. Note that, since this proof method uses several non-tight bounds, it is very well possible that there exist sets of parameters that do not satisfy these conditions, but still guarantee soundness and completeness. So if the conditions are not satisfied, we can only conclude that the proof method does not work in that case.

\subsection{Results for the asymmetric Tardos scheme}
\label{sub:Construction-AsymTardos}

In the original Tardos scheme, and in several papers discussing the Tardos scheme, the score function is asymmetric in $y_i$, as only the positions with $y_i = 1$ are taken into account for the accusations. The construction of this asymmetric Tardos scheme is the same as in Subsection \ref{sub:Construction-Tardos}, but with the scores from \eqref{scoresym} replaced by:
\begin{align}
  S_{ji} = \begin{cases}
    +\sqrt{(1 - p_i)/p_i} & \text{if $X_{ji} = 1, y_i = 1$}, \\
    -\sqrt{p_i/(1 - p_i)} & \text{if $X_{ji} = 0, y_i = 1$}, \\
    0 & \text{otherwise}.
    \end{cases} \label{scoreasym}
\end{align}
Blayer and Tassa performed an extensive analysis of this scheme in \cite{blayer08}, and showed that under the following assumptions, one can prove soundness and completeness for given $c$ and $\eta$. In these Theorems, the function $h^{-1}: (0, \infty) \to (\frac{1}{2}, \infty)$ is defined by $h^{-1}(x) = (e^x - 1 - x)/x^2$, while the function $h: (\frac{1}{2}, \infty) \to (0, \infty)$ denotes its inverse function as in \cite{blayer08}, so that $e^x \leq 1 + x + \lambda x^2$ for all $x \leq h(\lambda)$.

\begin{theorem} \cite[Theorem 1.1]{blayer08}
Let the Tardos scheme be constructed as in Subsection \ref{sub:Construction-Tardos}, but with the asymmetric score function from \eqref{scoreasym}. Let $d_{\alpha}, r$ be positive constants, with $r > \frac{1}{2}$, such that $d_{\ell}, d_z, d_{\delta}, d_{\alpha}$ and $r$ satisfy the following two requirements:
\begin{align*}
d_{\alpha} & \geq \frac{\sqrt{d_{\delta}}}{h(r) \sqrt{c}}, \tag{S1} \label{req1} \\
\frac{d_z}{d_{\alpha}} - \frac{r d_{\ell}}{d_{\alpha}^2} & \geq 1. \tag{S2} \label{req2}
\end{align*}
Then the scheme is $\eps_1$-sound.
\end{theorem}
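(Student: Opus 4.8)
The plan is the standard exponential‑Markov (Chernoff) argument for Tardos‑type schemes, arranged to match the parametrisation of Blayer and Tassa. By a union bound it suffices to fix one innocent user $j \notin C$ and show $\pr[S_j > Z] \le \eps_1/n = e^{-k}$; summing over the at most $n$ innocent users then yields $\pr[\sigma(\rho(X)) \not\subseteq C] \le \eps_1$. Introduce the Chernoff parameter $\alpha = 1/(d_{\alpha} c) > 0$. Since $S_j > Z$ is equivalent to $e^{\alpha S_j} > e^{\alpha Z}$, Markov's inequality gives $\pr[S_j > Z] \le e^{-\alpha Z}\,\expn[e^{\alpha S_j}]$, and it remains to bound $\expn[e^{\alpha S_j}]$.

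The key structural point is the conditioning. Although the forgery $\vec y = \rho(X)$ depends on the bias vector $\vec p = (p_1,\dots,p_{\ell})$ through the colluders' codewords, it does not depend on the innocent codeword $\vec x_j$; hence, conditionally on $\vec p$ and $\vec y$, the entries $X_{j1},\dots,X_{j\ell}$ are still independent with $X_{ji}$ distributed as Bernoulli$(p_i)$, so the per‑position scores $S_{ji}$ are independent and $\expn[e^{\alpha S_j}\mid \vec p, \vec y] = \prod_{i=1}^{\ell}\expn[e^{\alpha S_{ji}}\mid p_i, y_i]$. For a position with $y_i = 0$ the asymmetric score from \eqref{scoreasym} is $0$, so that factor equals $1$. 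For a position with $y_i = 1$, splitting into the cases $X_{ji}=1$ (probability $p_i$, score $\sqrt{(1-p_i)/p_i}$) and $X_{ji}=0$ (probability $1-p_i$, score $-\sqrt{p_i/(1-p_i)}$) and applying $e^x \le 1 + x + r x^2$ — which by the definition of $h$ is valid for all $x \le h(r)$, in particular for every $x \le 0$ and for the positive score as soon as $\alpha\sqrt{(1-p_i)/p_i} \le h(r)$ — makes the terms linear in $\alpha$ cancel exactly, leaving the factor bounded by $1 + r\alpha^2 \le e^{r\alpha^2}$. The constraint just invoked holds because $p_i \ge \delta = 1/(d_{\delta} c)$, so $\alpha\sqrt{(1-p_i)/p_i} \le \alpha\sqrt{1/\delta} = \sqrt{d_{\delta}}/(d_{\alpha}\sqrt{c})$, and this is at most $h(r)$ precisely by requirement \eqref{req1}. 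Multiplying the $\ell$ factors (each at most $e^{r\alpha^2}$) and averaging over $\vec p, \vec y$ gives $\expn[e^{\alpha S_j}] \le e^{r\alpha^2 \ell}$, hence $\pr[S_j > Z] \le e^{-\alpha Z + r\alpha^2 \ell}$.

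Finally, substitute $Z = d_z c k$, $\ell = d_{\ell} c^2 k$ and $\alpha = 1/(d_{\alpha} c)$ into the exponent: $-\alpha Z + r\alpha^2 \ell = -k\big(\tfrac{d_z}{d_{\alpha}} - \tfrac{r d_{\ell}}{d_{\alpha}^2}\big) \le -k$ by requirement \eqref{req2}. Thus $\pr[S_j > Z] \le e^{-k} = \eps_1/n$ for every innocent $j$, and the union bound finishes the proof. The computation is otherwise routine; the only genuinely delicate point is the conditioning step, where one must argue that the forgery may be treated as fixed relative to an innocent user's codeword (so that the scores factorise with the right marginal distributions) while simultaneously tracking the worst‑case bias $p_i = \delta$ that governs whether the Taylor step is licensed — and it is exactly this worst case that produces requirement \eqref{req1}, just as the exponent bookkeeping produces \eqref{req2}.
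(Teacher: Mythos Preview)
Your proof is correct and follows essentially the same Chernoff--Markov argument as the paper (and as Blayer--Tassa, whom the paper cites for this theorem): factorise $\expn[e^{\alpha S_j}]$ over positions, apply $e^x \le 1 + x + r x^2$ using \eqref{req1} to license the worst case $p_i = \delta$, observe that the linear terms vanish and the quadratic terms sum to $1$, and then read off \eqref{req2} from the exponent. Your treatment of the conditioning (that $\vec y$ is independent of the innocent user's codeword, so the per-position factorisation with Bernoulli$(p_i)$ marginals is legitimate) is more explicit than the paper's, but the route is the same.
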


\begin{theorem} \cite[Theorem 1.2]{blayer08}
Let the Tardos scheme be constructed as in Subsection \ref{sub:Construction-Tardos}, but with the asymmetric score function from \eqref{scoreasym}. Let $s, g$ be positive constants such that $d_{\ell}, d_z, d_{\delta}, s$ and $g$ satisfy the following two requirements:
\begin{align*}
\frac{1 - \frac{2}{d_{\delta}}}{\pi} - \frac{h^{-1}(s) s}{\sqrt{d_{\delta} c}} & \geq g, \tag{C1} \label{btreq3} \\
g d_{\ell} - d_z & \geq \eta \sqrt{\frac{d_{\delta}}{s^2 c}}. \tag{C2} \label{req4}
\end{align*}
Then the scheme is $\eps_2$-complete.
\end{theorem}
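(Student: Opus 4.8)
The plan is to run Tardos' completeness argument in the sharpened form of Blayer and Tassa, by controlling the lower tail of the \emph{coalition score}
$\tilde S = \sum_{j\in C} S_j = \sum_{i=1}^{\ell} U_i$, where $U_i = \sum_{j\in C} S_{ji}$ is the coalition score at position $i$. If no colluder is accused, then $S_j\le Z$ for every $j\in C$, and summing the $|C|$ inequalities $S_j\le Z$ over $j\in C$ and using $|C|\le c$ gives $\tilde S\le cZ$; hence it suffices to bound $\pr[\tilde S\le cZ]$.

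First I would apply a Chernoff bound to the lower tail: for a parameter $\alpha>0$ to be fixed later, $\pr[\tilde S\le cZ]\le e^{\alpha cZ}\,\expn\!\big[e^{-\alpha\tilde S}\big]$. Since the positions are drawn independently and the forgery symbol $y_i$ enters $\tilde S$ only through the single term $U_i$, for any fixed realisation of the biases and codeword symbols the pirates maximise $\prod_i e^{-\alpha U_i}$ by choosing each $y_i$ separately; taking expectations and using independence across positions, $\expn[e^{-\alpha\tilde S}]\le\prod_{i=1}^{\ell}\expn\!\big[\max_{y_i}e^{-\alpha U_i}\big]$, with the maximum over the values of $y_i$ permitted by the marking assumption and the expectation over the bias $p_i$ and the colluders' symbols at position $i$. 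This bound is valid for every (possibly randomised, history-dependent) pirate strategy.

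The heart of the proof is the per-position estimate. Conditioning on $p_i=p\in[\delta,1-\delta]$, each score $S_{ji}$ has absolute value at most $\sqrt{(1-\delta)/\delta}$, so $-\alpha U_i\le s$ as soon as $\alpha$ is taken of order $s/(c\sqrt{d_{\delta}c})$; this is exactly what permits the inequality $e^{x}\le 1+x+h^{-1}(s)x^{2}$ for $x\le s$ to be applied with $x=-\alpha U_i$, giving $\expn[e^{-\alpha U_i}\mid p]\le 1-\alpha\,\expn[U_i\mid p]+h^{-1}(s)\alpha^{2}\,\expn[U_i^{2}\mid p]\le\exp\!\big(-\alpha\,\expn[U_i\mid p]+h^{-1}(s)\alpha^{2}\,\expn[U_i^{2}\mid p]\big)$. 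Two ingredients are then needed: a lower bound on the first moment $\expn[U_i\mid p]$ minimised over pirate strategies, which after integration against the arcsine density \eqref{dist} produces the drift constant $\tfrac{1-2/d_{\delta}}{\pi}$, and an upper bound on the second moment $\expn[U_i^{2}\mid p]$, again supplied by the cutoff $\delta$. Multiplying over the $\ell$ positions and substituting into the Chernoff bound yields $\pr[\tilde S\le cZ]\le\exp\!\big(-\alpha\ell\big[\tfrac{1-2/d_{\delta}}{\pi}-\tfrac{h^{-1}(s)s}{\sqrt{d_{\delta}c}}\big]+\alpha cZ\big)$ for the appropriate choice of $\alpha$; requirement \eqref{btreq3} says precisely that the bracketed quantity is at least $g$, and requirement \eqref{req4} is, after inserting this $\alpha$, equivalent to $\alpha(g\ell-cZ)\ge\eta k$, so the bound collapses to $e^{-\eta k}=(\eps_1/n)^{\eta}=\eps_2$, which is the claim.

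The step I expect to be the main obstacle is the first-moment bound. Unlike in the soundness proof, where an innocent user has per-position score mean zero and one needs only a second-moment estimate, here the positive drift of the coalition score exists solely because the marking assumption correlates the colluders' symbols with the forced value of $y_i$; turning this into a clean lower bound valid for \emph{every} admissible strategy, and then integrating the resulting function of $p$ against the cut-off arcsine law to extract the constant $\tfrac{1}{\pi}$ together with the $\tfrac{2}{d_{\delta}}$ correction, is the delicate part, and it is exactly where the auxiliary constants $s$ and $g$ (and the critical value of $\alpha$) must be tracked carefully enough that the conditions come out in the stated form \eqref{btreq3}--\eqref{req4}.
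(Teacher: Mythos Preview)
Your proposal is correct and follows essentially the same approach as Blayer--Tassa (which the paper mirrors for the symmetric version in Section~\ref{sec:Completeness} and Appendix~\ref{sec:LemmaCompleteness}): a Chernoff bound on the coalition score, per-position moment-generating-function control via $e^x \le 1 + x + h^{-1}(s)x^2$, and the first-moment drift computation in which the marking assumption at the extreme counts $x\in\{0,c\}$ yields precisely the constant $(1-2/d_\delta)/\pi$. One small correction: the second moment works out to $\sum_x \binom{c}{x} F_{2,x} = c$ exactly, independently of the cutoff; $\delta$ enters the term $h^{-1}(s)\,s/\sqrt{d_\delta c}$ only through the choice $\alpha = s\sqrt{\delta}/c$ needed to make the Taylor inequality applicable.
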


Tardos' original choice of parameters was the following, which allowed him to prove his scheme is $\eps_1$-sound and $\eps_2$-complete for all $c \geq 2$ and $\eta \leq \sqrt{c}/4$ \cite[Theorems 1 and 2]{tardos03}:
\begin{align*}
d_{\ell} = 100, \quad d_z = 20, \quad d_{\delta} = 300, \quad d_{\alpha} = 10, \quad r = 1, \quad s = 1, \quad g = \frac{1}{4}.
\end{align*}
Blayer and Tassa proved that to achieve $\eps_1$-soundness and $\eps_2$-completeness for all $c \geq 2$ and $\eta \leq 1$, the following choice of parameters is also provably secure \cite[Section 2.4]{blayer08}:
\begin{align*}
d_{\ell} = 85, \quad d_z = 15, \quad d_{\delta} = 40, \quad d_{\alpha} = 8, \quad r = 0.611, \quad s = 0.757, \quad g = 0.2461.
\end{align*}
In \cite[Corollary 1]{skoric06}, \v{S}kori\'{c} et al.\ showed that the following choice of parameters suffices to prove soundness and completeness for asymptotically large $c$:
\begin{align*}
d_{\ell} \to 4\pi^2, \quad d_z \to 4\pi, \quad d_{\delta} \to \infty, \quad d_{\alpha} \to 2\pi, \quad r = 1, \quad s = h(1), \quad g \to \frac{1}{\pi}.
\end{align*}
According to the Central Limit Theorem, the scores of innocent users and the total score of the coalition converge to certain normal distributions. Under the assumption that the scores behave exactly like these normal distributions, \v{S}kori\'{c} et al.\ showed in \cite[Corollary 3]{skoric06} that the following choice of parameters is then sufficient and necessary to prove soundness and completeness:
\begin{align*}
d_{\ell} \to 2\pi^2, \quad d_z \to 2\pi, \quad d_{\delta} \to \infty.
\end{align*}
Applying the analysis from Section \ref{sec:Asymptotics} to the asymmetric Tardos scheme, we can prove that the following choice of parameters is provably sufficient for large $c$:\footnote{These results can be obtained by applying the analysis from Section \ref{sec:Asymptotics} to Blayer and Tassa's original analysis for the asymmetric Tardos scheme. The main difference is that then one needs $g = \frac{1}{\pi} + o(1)$ instead of $g = \frac{2}{\pi} + o(1)$, which causes an extra factor $4$ for $d_{\ell}$ and extra factors $2$ for $d_z$ and $d_{\alpha}$.}
\begin{align*}
d_{\ell} \to 2\pi^2, \quad d_z \to 2\pi, \quad d_{\delta} \to \infty, \quad d_{\alpha} \to \pi, \quad r \to \frac{1}{2}, \quad s \to \infty, \quad g \to \frac{1}{\pi}.
\end{align*}
So with Blayer and Tassa's proof construction, we obtain a $2$ times shorter asymptotic codelength compared to the shortest provable codelength of \v{S}kori\'{c} et al.\ for the asymmetric Tardos scheme, and we achieve the asymptotic optimal codelength for the asymmetric Tardos scheme which \v{S}kori\'{c} et al.\ only achieved when they added the assumption that scores behave like normal distributions.

\subsection{Results for the symmetric Tardos scheme}
\label{sub:Construction-SymTardos}

We will prove in Sections \ref{sec:Soundness} and \ref{sec:Completeness} that with the following assumptions on the parameters, we can also prove soundness and completeness for the symmetric Tardos scheme.

\begin{theorem} \label{thm:Soundness}
Let the Tardos scheme be constructed as in Subsection \ref{sub:Construction-Tardos}, and let $d_{\alpha}, r$ be positive constants, with $r > \frac{1}{2}$, such that $d_{\ell}, d_z, d_{\delta}, d_{\alpha}$ and $r$ satisfy the requirements from \eqref{req1} and \eqref{req2}. Then the scheme is $\eps_1$-sound.
\end{theorem}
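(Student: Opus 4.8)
The plan is to follow the soundness argument of Blayer and Tassa (their Theorem 1.1) essentially verbatim; the whole point is that the symmetrized score function from \eqref{scoresym} satisfies exactly the same two moment bounds used there, so the hypotheses \eqref{req1}--\eqref{req2} carry over unchanged. First I would reduce to a single user: by the union bound it suffices to show that for each fixed innocent user $j \notin C$ one has $\pr[S_j > Z] \le \eps_1/n = e^{-k}$; summing over the at most $n$ innocent users then gives $\pr[\sigma(\rho(X)) \not\subseteq C] \le \eps_1$.

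Then I would condition on the bias vector $(p_1,\ldots,p_\ell)$ and on the forgery $\vec y$. Since $\vec y$ is a function of the colluders' codewords only and $j \notin C$, the entries $X_{j1},\ldots,X_{j\ell}$ are, conditionally, independent Bernoulli$(p_i)$ variables that are independent of $\vec y$; hence the per-position scores $S_{ji}$ are conditionally independent. A direct computation from \eqref{scoresym} gives, for every $i$ and regardless of whether $y_i=0$ or $y_i=1$, that $\expn[S_{ji}\mid p_i,y_i]=0$ and $\expn[S_{ji}^2\mid p_i,y_i]=1$; this is the only place the symbol-symmetry matters, as it makes the $y_i=0$ positions carry mean-zero, unit-second-moment scores rather than being identically zero. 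The cutoff $\delta=1/(d_\delta c)$ forces $|S_{ji}| \le \sqrt{(1-\delta)/\delta} < \sqrt{d_\delta c}$ uniformly. Choosing the Markov parameter $\alpha = 1/(d_\alpha c)$, requirement \eqref{req1} gives $\alpha |S_{ji}| < \sqrt{d_\delta}/(d_\alpha\sqrt c) \le h(r)$, so the defining property of $h$ yields $e^{\alpha S_{ji}} \le 1 + \alpha S_{ji} + r\alpha^2 S_{ji}^2$ pointwise. Taking conditional expectations, $\expn[e^{\alpha S_{ji}}\mid p_i,y_i] \le 1 + r\alpha^2 \le e^{r\alpha^2}$, and by conditional independence $\expn[e^{\alpha S_j}\mid \cdots] \le e^{r\alpha^2\ell}$. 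Markov's inequality applied to $e^{\alpha S_j}$ then gives
\[
\pr[S_j > Z\mid \cdots] \le e^{-\alpha Z}\,\expn[e^{\alpha S_j}\mid\cdots] \le \exp\!\left(-\alpha Z + r\alpha^2\ell\right) = \exp\!\left(-k\left(\tfrac{d_z}{d_\alpha}-\tfrac{r d_\ell}{d_\alpha^2}\right)\right) \le e^{-k},
\]
where the last step is requirement \eqref{req2}; since the bound is uniform over the conditioning it also holds unconditionally, completing the reduction.

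There is no genuinely hard step here: the content is the bookkeeping together with the observation that soundness does not degrade when passing from the asymmetric to the symmetric score function. The only points needing care are the conditioning argument (that an innocent user's codeword is independent of the forgery, so that $\expn[S_{ji}\mid p_i,y_i]=0$ and the $S_{ji}$ are conditionally independent) and the uniform tail bound $|S_{ji}| < \sqrt{d_\delta c}$ coming from the cutoff, which with the choice $\alpha = 1/(d_\alpha c)$ is exactly what makes \eqref{req1} and \eqref{req2} line up as in the asymmetric case — which is precisely why Theorem~\ref{thm:Soundness} can be stated with the same hypotheses as Blayer and Tassa's Theorem 1.1.
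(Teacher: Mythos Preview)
Your proposal is correct and follows essentially the same route as the paper's proof: the paper likewise reduces to a single innocent user, establishes the moment identities $\expn[S_{ji}]=0$, $\expn[S_{ji}^2]=1$ for the symmetric score function, uses \eqref{req1} together with the cutoff to bound $\alpha S_{ji}<h(r)$ and apply $e^x\le 1+x+rx^2$, and then concludes via Markov and \eqref{req2}. Your explicit conditioning on $(\vec p,\vec y)$ is a slightly more careful phrasing of the independence that the paper invokes in passing, but the argument is the same.
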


\begin{theorem} \label{thm:Completeness}
Let the Tardos scheme be constructed as in Subsection \ref{sub:Construction-Tardos}, and let $s, g$ be positive constants, such that $d_{\ell}, d_z, d_{\delta}, s$ and $g$ satisfy \eqref{req4} and the following requirement:
\begin{align*}
\frac{2 - \frac{4}{d_{\delta}}}{\pi} - \frac{h^{-1}(s) s}{\sqrt{d_{\delta} c}} & \geq g. \tag{C1'} \label{req3}
\end{align*}
Then the scheme is $\eps_2$-complete.
\end{theorem}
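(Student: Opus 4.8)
The plan is to run the completeness argument of Blayer and Tassa almost unchanged; the symbol-symmetric score function \eqref{scoresym} enters only through a single per-coordinate estimate, where it gains a factor $2$ over the asymmetric one and thereby replaces their condition (C1) by \eqref{req3}. First reduce to the total coalition score: writing $S=\sum_{j\in C}S_j$ and $S_i=\sum_{j\in C}S_{ji}$ for the coalition's score in coordinate $i$, so $S=\sum_{i=1}^{\ell}S_i$, we note that if no colluder is accused then $S_j\le Z$ for all $j\in C$, hence $S\le cZ$; so it suffices to bound $\pr[S\le cZ]$. Apply the Chernoff bound $\pr[S\le cZ]=\pr[e^{-\beta S}\ge e^{-\beta cZ}]\le e^{\beta cZ}\,\expn[e^{-\beta S}]$ for a parameter $\beta>0$, to be fixed at the end as $\beta=s/(\sqrt{d_{\delta}}\,c^{3/2})$ — the value for which the two conditions produced below become precisely \eqref{req3} and \eqref{req4}. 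One then wants $\expn[e^{-\beta S}]\le\prod_{i=1}^{\ell}\expn[e^{-\beta S_i}]$ with each factor bounded uniformly over pirate strategies. This factorisation is the delicate point, because the forgery symbol $y_i$ may depend on the whole colluder codeword matrix rather than only on its $i$-th column; it is handled exactly as in Tardos' and Blayer--Tassa's proofs, by conditioning on all columns but the $i$-th, under which the pirates' choice of $y_i$ becomes a position-wise randomised rule given by probabilities $\theta_0,\dots,\theta_c$ of setting $y_i=1$ when $m$ of the $c$ colluders carry a $1$ — with $\theta_0=0$ and $\theta_c=1$ forced by the marking assumption, and the rule independent of $p_i$. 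This step does not involve the score function and I would not reproduce it.

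For a single coordinate, a case check on \eqref{scoresym} gives $S_{ji}=(2y_i-1)\sigma_{ji}$, where $\sigma_{ji}=\sqrt{(1-p_i)/p_i}$ if $X_{ji}=1$ and $\sigma_{ji}=-\sqrt{p_i/(1-p_i)}$ if $X_{ji}=0$; thus $S_i=(2y_i-1)\sum_{j\in C}\sigma_{ji}$, and since $(2y_i-1)^2=1$ the quantity $S_i^2=\bigl(\sum_{j\in C}\sigma_{ji}\bigr)^2$ does not involve $y_i$. Given $p_i$ the $\sigma_{ji}$ ($j\in C$) are i.i.d.\ with mean $0$ and second moment $1$, so $\expn[S_i^2\mid p_i]=c$ and hence $\expn[S_i^2]=c$ for \emph{every} strategy. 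Also $|S_i|\le c\sqrt{(1-\delta)/\delta}$, so with the chosen $\beta$ we have $|\beta S_i|\le s\sqrt{1-\delta}<s=h\bigl(h^{-1}(s)\bigr)$, and therefore $e^{-\beta S_i}\le 1-\beta S_i+h^{-1}(s)\,(\beta S_i)^2$ pointwise by the defining property of $h$. Taking expectations and using $1+t\le e^t$,
\begin{align*}
\expn[e^{-\beta S_i}]\;\le\;1-\beta\,\expn[S_i]+h^{-1}(s)\,\beta^2 c\;\le\;\exp\!\Bigl(-\beta\bigl(\expn[S_i]-h^{-1}(s)\,\beta c\bigr)\Bigr),\qquad h^{-1}(s)\,\beta c=\frac{h^{-1}(s)\,s}{\sqrt{d_{\delta}c}}.
\end{align*}

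The only genuinely new step is a strategy-independent lower bound on $\expn[S_i]$. With $\theta_0,\dots,\theta_c$ as above, $\expn[S_i\mid p]\sqrt{p(1-p)}=\sum_{m=0}^{c}\binom{c}{m}p^m(1-p)^{c-m}(2\theta_m-1)(m-cp)$; since $p^{m-1}(1-p)^{c-m-1}(m-cp)=\frac{d}{dp}\bigl(p^m(1-p)^{c-m}\bigr)$, dividing by $p(1-p)$ identifies the right-hand side with $G'(p)$ for $G(p):=\sum_{m=0}^{c}(2\theta_m-1)\binom{c}{m}p^m(1-p)^{c-m}$, and integrating against the density \eqref{dist} gives $\expn[S_i]=\bigl(G(1-\delta)-G(\delta)\bigr)/(\pi-4\delta')$. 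Using $2\theta_c-1=1$, $2\theta_0-1=-1$ and $|2\theta_m-1|\le1$ one obtains $G(1-\delta)\ge2(1-\delta)^c-1$ and $G(\delta)\le1-2(1-\delta)^c$, hence $G(1-\delta)-G(\delta)\ge4(1-\delta)^c-2\ge4(1-c\delta)-2=2-\tfrac{4}{d_{\delta}}$ by Bernoulli's inequality, while $\pi-4\delta'<\pi$; therefore $\expn[S_i]\ge(2-\tfrac{4}{d_{\delta}})/\pi$ for every strategy. (For the asymmetric score, $S_i=\sum_{j\in C}\sigma_{ji}$ on the coordinates with $y_i=1$ and $0$ elsewhere, so the same argument with $G(p)=\sum_m\theta_m\binom{c}{m}p^m(1-p)^{c-m}$ gives only $G(1-\delta)-G(\delta)\ge2(1-\delta)^c-1\ge1-\tfrac{2}{d_{\delta}}$ — exactly half — which is why \eqref{req3} carries $2-\tfrac{4}{d_{\delta}}$ where Blayer--Tassa's (C1) carries $1-\tfrac{2}{d_{\delta}}$.)

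Finally, combine: by \eqref{req3}, $\expn[S_i]-h^{-1}(s)\beta c\ge(2-\tfrac{4}{d_{\delta}})/\pi-h^{-1}(s)s/\sqrt{d_{\delta}c}\ge g$, so $\expn[e^{-\beta S_i}]\le e^{-\beta g}$, whence $\expn[e^{-\beta S}]\le e^{-\beta g\ell}$ and $\pr[S\le cZ]\le e^{-\beta(g\ell-cZ)}$. Since $g\ell-cZ=(g d_{\ell}-d_z)c^2 k$ and $\beta=s/(\sqrt{d_{\delta}}\,c^{3/2})$, we get $\beta(g\ell-cZ)=s(g d_{\ell}-d_z)\sqrt{c}\,k/\sqrt{d_{\delta}}\ge\eta k$ by \eqref{req4}, so $\pr[S\le cZ]\le e^{-\eta k}=(\eps_1/n)^{\eta}=\eps_2$, which is $\eps_2$-completeness. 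I expect the factorisation of the first paragraph to be the main obstacle: the naive bound $e^{-\beta S_i}\le\max_{y_i}e^{-\beta(2y_i-1)\sum_{j\in C}\sigma_{ji}}$ followed by independence of columns is fatal, as it would hand the coalition the sign of $m-cp_i$, which it cannot see, and produce per-coordinate factors exceeding $1$; one must genuinely exploit the pirates' ignorance of $p_i$, which is exactly what the Tardos/Blayer--Tassa conditioning argument provides. Everything else is routine, and the move to the symmetric score function touches only the factor-$2$ gain in the bound on $\expn[S_i]$.
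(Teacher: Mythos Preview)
Your proof is correct and follows the same overall architecture as the paper: reduce to the coalition sum $S$, apply Markov/Chernoff with $\beta=s\sqrt{\delta}/c$, bound $\expn[e^{-\beta S}]\le e^{-g\beta\ell}$ per coordinate using $e^x\le 1+x+h^{-1}(s)x^2$, and then read off \eqref{req3} and \eqref{req4}. The factorisation step you flag is handled in the paper exactly as you describe (worst-case position-wise rule over $m$, with $y=0$ forced at $m=0$ and $y=1$ at $m=c$).

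The one organisational difference worth noting is in the per-coordinate estimate. The paper expands $E_{0,x}$ and $E_{1,x}$ separately into $F_{0,x}\pm\beta F_{1,x}+h^{-1}(s)\beta^2 F_{2,x}$, evaluates $F_{1,x}$ via the primitive $p^x(1-p)^{c-x}$, bounds $M_x$ term by term, and then sums over $x$; this produces some case splitting at $x\le c/2$ versus $x>c/2$. Your route --- apply the quadratic bound pointwise to $e^{-\beta S_i}$, then compute $\expn[S_i^2]=c$ directly and bound $\expn[S_i]$ via the single antiderivative $G(p)=\sum_m(2\theta_m-1)\binom{c}{m}p^m(1-p)^{c-m}$ --- is the same computation reorganised: your $G$ packages all the $F_{1,x}$'s at once, and the marking constraints $\theta_0=0,\theta_c=1$ enter through $G(\delta)$ and $G(1-\delta)$ just as the paper's special treatment of $M_0,M_c$ does. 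Both arrive at the same bound $\expn[S_i]\ge(2-4/d_\delta)/(\pi-4\delta')\ge(2-4/d_\delta)/\pi$, and hence at \eqref{req3}; your version is somewhat more streamlined.
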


Using the above results, in Section \ref{sec:Optimization} we will prove $\eps_1$-soundness and $\eps_2$-completeness for all $c \geq 2$ and $\eta \leq 1$ for the following set of parameters:
\begin{align*}
d_{\ell} = 23.79, \quad d_z = 8.06, \quad d_{\delta} = 28.31, \quad d_{\alpha} = 4.58, \quad r = 0.67, \quad s = 1.07, \quad g = 0.49.
\end{align*}
This improves upon the constants from Blayer and Tassa by a factor more than $3.5$, and it improves upon the original Tardos scheme by a factor more than $4$. Furthermore, for bigger $c$ and smaller $\eta$ the values of $d_{\ell}$ further decrease, easily leading to a factor $10$ improvement over the original Tardos scheme.

\v{S}kori\'{c} et al. showed that for asymptotically large $c$, the following set of parameters is sufficient for proving soundness and completeness in the symmetric Tardos scheme \cite[Corollary 1]{skoric08}:
\begin{align*}
d_{\ell} \to \pi^2, \quad d_z \to 2\pi, \quad d_{\delta} \to \infty, \quad d_{\alpha} \to \pi, \quad r = 1, \quad s = h(1), \quad g \to \frac{2}{\pi}.
\end{align*}
With the added assumption that the scores of innocent users and the joint score of guilty users are normally distributed, \v{S}kori\'{c} et al.\ also showed that the following set of parameters is sufficient for soundness and completeness, for asymptotically large $c$ \cite[Corollary 2]{skoric08}:
\begin{align*}
d_{\ell} \to \frac{\pi^2}{2}, \quad d_z \to \pi, \quad d_{\delta} \to \infty.
\end{align*}
Since by the Central Limit Theorem these scores will also converge to normal distributions, this shows that the asymptotic optimal codelength for the symmetric Tardos scheme is $\ell = (\frac{\pi^2}{2} + o(1))c^2 \ln(n/\eps_1)$. We show in Section \ref{sec:Asymptotics} that for asymptotically large $c$, we can actually prove soundness and completeness for this optimal codelength, without any added assumptions. In the asymptotic case of $c \to \infty$, our construction gives the following parameters:
\begin{align*}
d_{\ell} \to \frac{\pi^2}{2}, \quad d_z \to \pi, \quad d_{\delta} \to \infty, \quad d_{\alpha} \to \frac{\pi}{2}, \quad r \to \frac{1}{2}, \quad s \to \infty, \quad g \to \frac{2}{\pi}.
\end{align*}
Similar to the asymmetric case, we thus get a factor $2$ improvement over \v{S}kori\'{c} et al.'s best provable asymptotic codelength, and we achieve the asymptotic optimal codelength which \v{S}kori\'{c} et al.\ only proved with the added assumption that the scores behave like normal distributions. This also improves upon results from Nuida et al.\ in \cite{nuida09}, who showed that with certain discrete distribution functions $F$, one can prove soundness and completeness for $\ell \approx 5.35 c^2 \ln(n/\eps_1)$ for large $c$. With our construction, we show a codelength of $\ell \approx 4.93 c^2 \ln(n/\eps_1)$ is provably secure for large $c$.

\section{Soundness}
\label{sec:Soundness}

Here we will prove Theorem \ref{thm:Soundness}, i.e. prove the soundness property from Definition \ref{def:Secureness}, under the assumptions \eqref{req1} and \eqref{req2}. We will closely follow the proof of soundness of Blayer and Tassa of \cite[Theorem 1.1]{blayer08}. We will first prove an upper bound on $\expn\left[e^{\alpha S_j}\right]$, with $\alpha = 1/(d_{\alpha} c)$ and using only \eqref{req1}, and then use this result together with \eqref{req2} to prove upper bounds on $\pr[j \in \sigma(\vec{y})]$ for innocent users $j$, and $\pr[\sigma(\rho(X)) \not\subseteq C]$.

\begin{lemma} \label{lem:Soundness}
Let $d_{\alpha}$ and $r$ be positive constants, with $r > \frac{1}{2}$, such that $d_{\delta}, d_{\alpha}$ and $r$ satisfy Equation \eqref{req1}. Let $j$ be an innocent user, and let $S_j$ be the user's score in the Tardos scheme from Subsection \ref{sub:Construction-Tardos}. Let $\alpha = 1/(d_{\alpha} c)$. Then
\begin{align}
\expn_{\vec{y},X,\vec{p}}\left[e^{\alpha S_j}\right] \leq e^{+r \alpha^2 \ell}. \label{eq:lem1}
\end{align}
\end{lemma}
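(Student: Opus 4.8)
The plan is to bound the expectation by conditioning on the symbol $p_i$ at each position and exploiting independence across positions. Since $S_j = \sum_{i=1}^\ell S_{ji}$ and the entries $X_{ji}$, $y_i$ contributions factor position-by-position (for an innocent user $j$, the codeword entry $X_{ji}$ is independent of $\vec y$ given $p_i$), we have
\begin{align*}
\expn_{\vec y, X, \vec p}\left[e^{\alpha S_j}\right] = \prod_{i=1}^\ell \expn_{p_i}\left[\expn\left[e^{\alpha S_{ji}} \mid p_i\right]\right],
\end{align*}
so it suffices to show $\expn\left[e^{\alpha S_{ji}} \mid p_i\right] \leq e^{r\alpha^2}$ for every $p_i \in [\delta, 1-\delta]$; multiplying over the $\ell$ positions then gives $e^{r\alpha^2\ell}$.

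First I would compute the per-position moment generating function explicitly. Writing $p = p_i$ and abbreviating $a = \sqrt{(1-p)/p}$, $b = \sqrt{p/(1-p)}$ (so $ab = 1$), the four cases of \eqref{scoresym} occur with probabilities governed by $\pr[X_{ji}=1]=p$ and by whatever $y_i$ is. The key point of the \emph{symmetric} score is that, whether $y_i = 0$ or $y_i = 1$, the conditional distribution of $S_{ji}$ given $p_i$ is the same: it takes value $+a$ with probability $p$ and value $-b$ with probability $1-p$ (when $y_i = 1$), or $+b$ with probability $1-p$ and $-a$ with probability $p$ (when $y_i = 0$) — and since $a,b$ swap roles along with the probabilities, both yield a centered random variable with $\expn[S_{ji}\mid p_i] = pa - (1-p)b = 0$ and $\expn[S_{ji}^2 \mid p_i] = pa^2 + (1-p)b^2 = 1$. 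So in all cases $\expn[e^{\alpha S_{ji}}\mid p_i] = p\,e^{\alpha a} + (1-p)e^{-\alpha b}$ (or the $y_i=0$ analogue, which is identical after relabeling). I would then bound $e^x \leq 1 + x + \lambda x^2$ for $x \leq h(\lambda)$, applied with $\lambda = r$ at the two points $x = \alpha a$ and $x = -\alpha b$: using that $\alpha = 1/(d_\alpha c)$, $a \leq \sqrt{(1-\delta)/\delta} \leq \sqrt{1/\delta} = \sqrt{d_\delta c}$, and $b \leq a$, the largest magnitude of the exponent is at most $\alpha a \leq \sqrt{d_\delta c}/(d_\alpha c) = \sqrt{d_\delta}/(d_\alpha\sqrt c)$, which is exactly $\leq h(r)$ by requirement \eqref{req1}. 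Hence the inequality $e^x \leq 1 + x + r x^2$ applies at both points, giving
\begin{align*}
\expn[e^{\alpha S_{ji}}\mid p_i] \leq 1 + \alpha\,\expn[S_{ji}\mid p_i] + r\alpha^2\,\expn[S_{ji}^2 \mid p_i] = 1 + r\alpha^2 \leq e^{r\alpha^2}.
\end{align*}

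The main obstacle — really the only subtlety — is checking that the exponents $\alpha a$ and $-\alpha b$ both fall within the region $x \leq h(r)$ where the quadratic bound on $e^x$ is valid, uniformly over $p_i \in [\delta, 1-\delta]$; this is where the cutoff $\delta = 1/(d_\delta c)$ and assumption \eqref{req1} get used, and one must be careful that $h^{-1}$ is defined on $(0,\infty)$ with range $(\tfrac12,\infty)$, forcing $r > \tfrac12$. The negative exponent $-\alpha b$ is automatically fine since $e^x \leq 1 + x + rx^2$ holds for all $x \leq 0$ when $r \geq \tfrac12$ (indeed for all $x \le h(r)$ and $h(r)>0$). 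Everything else is the routine moment computation above and taking the product over positions; note also that the bound holds after averaging over $p_i$ since it holds pointwise in $p_i$, so the arcsine distribution \eqref{dist} plays no role in this particular lemma beyond keeping $p_i \in [\delta, 1-\delta]$.
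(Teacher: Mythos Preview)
Your proof is correct and follows essentially the same approach as the paper: factor over positions, bound $|\alpha S_{ji}|$ via the cutoff and \eqref{req1}, apply the quadratic inequality $e^x \leq 1 + x + r x^2$ for $x \leq h(r)$, and use $\expn[S_{ji}]=0$, $\expn[S_{ji}^2]=1$. One small slip: the claim ``$b \leq a$'' fails for $p > \tfrac12$, but this is harmless since by symmetry both $a = \sqrt{(1-p)/p}$ and $b = \sqrt{p/(1-p)}$ are bounded by $\sqrt{(1-\delta)/\delta} \leq \sqrt{1/\delta}$ on $[\delta,1-\delta]$, which is all you need.
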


\begin{proof}
First we fill in $S_j = \sum_{i = 1}^{\ell} S_{ji}$ and use that $S_j$ does not depend on $X_{j'i}$ for $j' \neq j$ to get
\begin{align*}
\expn_{\vec{y},X,\vec{p}}\left[e^{\alpha S_j}\right] = \expn_{\vec{y},\vec{X}_j,\vec{p}}\left[\prod_{i=1}^{\ell} e^{\alpha S_{ji}}\right] = \prod_{i=1}^{\ell} \expn_{y_i,X_{ji},p_i}\left[e^{\alpha S_{ji}}\right].
\end{align*}
Since $S_{ji} < \sqrt{1/\delta} = \sqrt{d_{\delta} c}$ it follows that $\alpha S_{ji} < \sqrt{d_{\delta}}/(d_{\alpha}\sqrt{c})$. From \eqref{req1} we know that $\sqrt{d_{\delta}}/(d_{\alpha}\sqrt{c}) \leq h(r)$ for our choice of $r$, hence $\alpha S_{ji} < h(r)$. From the definition of $h$ we know that $e^x \leq 1 + x + r x^2$ exactly when $x \leq h(r)$. Using this with $x = \alpha S_{ji}$ we get
\begin{align*}
\expn\left[e^{\alpha S_{ji}}\right] \leq \expn\left[1 + \alpha S_{ji} + r (\alpha S_{ji})^2\right] = 1 + \alpha \expn[S_{ji}] + r \alpha^2 \expn[S_{ji}^2].
\end{align*}
We can easily calculate $\expn[S_{ji}]$ and $\expn[S_{ji}^2]$, as $y_i$ and $X_{ji}$ are independent for innocent users $j$. As in \cite[Lemmas 2 and 3]{skoric08}, we obtain
\begin{align}
\expn[S_{ji}] = 0, \quad \expn[S_{ji}^2] = 1. \label{sound1}
\end{align}
So it follows that $\expn\left[e^{\alpha S_{ji}}\right] \leq 1 + r \alpha^2 \leq e^{r \alpha^2}$, and $\expn_{\vec{y},X,\vec{p}}\left[e^{\alpha S_j}\right] \leq e^{r \alpha^2 \ell}$, which was to be proven.
\end{proof}

\begin{proof}[Theorem \ref{thm:Soundness}]
We prove that the probability of accusing any particular innocent user is at most $\eps_1/n$. Since there are at most $n$ innocent users, the probability of not accusing any innocent users is then at least $(1 - \eps_1/n)^n \geq 1 - \eps_1$, which then proves the scheme is $\eps_1$-sound.

Since a user is accused if and only if his score $S_j$ exceeds $Z$, we need to prove that $\pr[S_j > Z] \leq \eps_1/n$ for innocent users $j$. First of all, we write $\alpha = 1/(d_{\alpha} c)$, and we use the Markov inequality and Lemma \ref{lem:Soundness} to obtain
\begin{align*}
\pr[j \in \sigma(\vec{y})] = \pr[S_j > Z] = \pr\left[e^{\alpha S_j} > e^{\alpha Z}\right] \leq e^{-\alpha Z} \expn\left[e^{\alpha S_j}\right] \leq e^{-\alpha Z + r \alpha^2 \ell}.
\end{align*}
Since we want to prove that $\pr[j \in \sigma(\vec{y})] \leq \eps_1/n$, the proof would be complete if $e^{-\alpha Z + r \alpha^2 \ell} \leq e^{-k} \leq \eps_1/n$, i.e.\ if $-\alpha Z + r \alpha^2 \ell \leq -k$. Filling in $\alpha = 1/(d_{\alpha} c), Z = d_z c k$ and $\ell = d_{\ell} c^2 k$, and dividing both sides by $-k$, we get
\begin{align*}
\frac{d_z}{d_{\alpha}} - \frac{r d_{\ell}}{d_{\alpha}^2} \geq 1.
\end{align*}
This is exactly inequality \eqref{req2}, which was assumed to hold. This completes the proof.
\end{proof}

Compared to the original proof in \cite{blayer08}, this proof has barely changed. The only difference is that now the scores are counted for all positions $i$, instead of only those positions where $y_i = 1$. However, since in the proof in \cite{blayer08} this number of positions was bounded by $\ell$, the result remains the same. This explains why we can prove $\eps_1$-soundness with the symmetric score function under the same assumptions \eqref{req1}, \eqref{req2} as in \cite{blayer08}.

\section{Completeness}
\label{sec:Completeness}

For the proof of Theorem \ref{thm:Completeness}, we will again closely follow the proof of Blayer and Tassa of \cite[Theorem 1.2]{blayer08}, and make changes where necessary to incorporate the symbol-symmetric score function. We first give a Lemma to bound the expectation value of $\expn_{\vec{y},X,\vec{p}}\left[e^{-\beta S}\right]$ with $\beta = s\sqrt{\delta}/c$ and $S = \sum_{j \in C} S_j$, and then use this Lemma to prove completeness.

\begin{lemma}
\label{lem:Completeness}
Let $s$ and $g$ be positive constants such that $d_{\delta}, s$ and $g$ satisfy \eqref{req3}. Let $\beta = s\sqrt{\delta}/c$, let $C$ be a coalition of size $c$, and let $S = \sum_{j \in C} S_j$ be their total coalition score in the Tardos scheme from Subsection \ref{sub:Construction-Tardos}. Then
\begin{align}
\expn_{\vec{y},X,\vec{p}}\left[e^{-\beta S}\right] \leq e^{-g \beta \ell}. \label{eq:lem2}
\end{align}
\end{lemma}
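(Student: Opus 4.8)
The plan is to mirror the structure of the soundness argument (Lemma \ref{lem:Soundness}), but now working with the \emph{negative} exponential moment of the \emph{total} coalition score $S = \sum_{j\in C} S_j$, and exploiting the marking assumption to get a useful lower bound on the per-position coalition contribution. First I would factor the expectation over positions: since the $p_i$ are drawn independently and, conditioned on $p_i$, the symbols $X_{ji}$ for $j\in C$ and the pirate output $y_i$ depend only on coordinate $i$, we have $\expn_{\vec{y},X,\vec{p}}[e^{-\beta S}] = \prod_{i=1}^{\ell}\expn_{y_i,(X_{ji})_{j\in C},p_i}[e^{-\beta S_i^C}]$ where $S_i^C = \sum_{j\in C} S_{ji}$ is the coalition's score at position $i$. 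It then suffices to show $\expn[e^{-\beta S_i^C}] \le e^{-g\beta}$ for each position, and multiply.

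For a single position, write the coalition state as follows: let $p=p_i$, let $a$ be the number of colluders with a $1$ in position $i$ (so $c-a$ have a $0$), and suppose the pirates output $y_i=1$ (the case $y_i=0$ is symmetric by the symbol-symmetry of the distribution and the score function, which is exactly what the symmetric construction buys us and why the factor $2$ appears in \eqref{req3}). By the marking assumption $y_i=1$ forces $a\ge 1$ whenever $a<c$, and $y_i=0$ forces $a<c$; the symbol-symmetry lets us combine the two cases. With $y_i=1$, each of the $a$ colluders with a $1$ contributes $+\sqrt{(1-p)/p}$ and each of the $c-a$ colluders with a $0$ contributes $-\sqrt{p/(1-p)}$, so $S_i^C = a\sqrt{(1-p)/p} - (c-a)\sqrt{p/(1-p)}$. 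The key step is to bound $\expn[e^{-\beta S_i^C}]$ from above using $e^{-x}\le 1 - x + h^{-1}(s)s\,x^2$-type control: since $\beta|S_i^C|$ is bounded (each $|S_{ji}| < \sqrt{d_\delta c}$, so $\beta|S_i^C| \le s\sqrt{\delta}/c \cdot c\sqrt{1/\delta} \cdot(\text{bounded factor})$), we can apply the defining inequality of $h$ to $x=-\beta S_i^C$ in the regime $x \le h^{-1}(s)$... more precisely, following Blayer–Tassa, one uses $e^{-x} \le 1 - x + s\, x^2$ valid for $x \ge -h(s)$ combined with the bound $|\beta S_i^C|$ small, to get $\expn[e^{-\beta S_i^C}] \le 1 - \beta\,\expn[S_i^C] + s\beta^2\,\expn[(S_i^C)^2]$, and then $\le \exp(-\beta\,\expn[S_i^C] + s\beta^2\,\expn[(S_i^C)^2])$.

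What remains is to estimate the two moments $\expn[S_i^C]$ and $\expn[(S_i^C)^2]$, where the expectation is over $p\sim f$, over the colluders' symbols given $p$, and over the (worst-case) pirate choice of $y_i$. The crucial quantity is $\expn[S_i^C] \ge (\text{something like}) \frac{2-4/d_\delta}{\pi}$: using the marking assumption one shows that conditioned on $p$ the expected coalition score satisfies $\expn[S_i^C \mid p] \ge \frac{c}{\pi}\cdot\frac{1}{\sqrt{p(1-p)}}\cdot(\ldots)$, and integrating against $f(p)$ over $[\delta,1-\delta]$ — where the $\arcsin$ distribution is designed precisely so that $\int \frac{f(p)}{\sqrt{p(1-p)}}\,dp$ behaves nicely, the cutoff producing the $\tfrac{2}{d_\delta}$ correction and the symmetric score producing the factor $2$ relative to Blayer–Tassa's $\tfrac{1}{\pi}$ — gives the bound $\expn[S_i^C] \ge c\,\frac{2-4/d_\delta}{\pi\sqrt{\delta}}\cdot\sqrt\delta$ (up to my getting the normalization right). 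Meanwhile $\expn[(S_i^C)^2]$ is bounded using $|S_i^C| < \sqrt{d_\delta c}\cdot$(number of terms) crudely, contributing the $\frac{h^{-1}(s)s}{\sqrt{d_\delta c}}$ term. Combining, $-\beta\,\expn[S_i^C] + s\beta^2\,\expn[(S_i^C)^2] \le -\beta\bigl(\frac{2-4/d_\delta}{\pi} - \frac{h^{-1}(s)s}{\sqrt{d_\delta c}}\bigr) \le -g\beta$ by \eqref{req3}, so $\expn[e^{-\beta S_i^C}]\le e^{-g\beta}$, and taking the product over the $\ell$ positions yields \eqref{eq:lem2}.

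The main obstacle, and the place where care is genuinely needed, is the second bullet of the calculation: getting the constant $\frac{2-4/d_\delta}{\pi}$ correct when lower-bounding $\expn[S_i^C]$ under the adversarial choice of $y_i$. One must verify that the worst case over pirate strategies $\rho$ (respecting the marking assumption) still leaves $\expn[S_i^C\mid p]$ large enough; this uses the symbol-symmetry of $f$ and of the score function so that the two possible outputs $y_i\in\{0,1\}$ are handled uniformly, and it is exactly here that the symmetric scheme gains the factor of $2$ over Blayer–Tassa's asymmetric \eqref{btreq3}. The $\arcsin$-distribution integral and the $\delta$-cutoff bookkeeping (tracking the $4/d_\delta$ term precisely) are the routine-but-delicate parts; the moment bound on $\expn[(S_i^C)^2]$ and the passage $1+t\le e^t$ are standard.
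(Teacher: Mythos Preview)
Your high-level plan matches the paper's: factor over positions, apply the quadratic bound $e^{w} \le 1 + w + h^{-1}(s)\,w^{2}$ (valid for $w \le s$; note the coefficient is $h^{-1}(s)$, not $s$ as you write), and combine the first- and second-moment contributions into $1 - g\beta$ via \eqref{req3}. However, the second-moment step would fail as you describe it. Your ``crude'' bound $|S_i^C| \le c\sqrt{d_\delta c}$ gives $\expn[(S_i^C)^2] \le d_\delta c^3$; multiplied by $h^{-1}(s)\beta^2 = h^{-1}(s)s^2/(d_\delta c^3)$ this yields the constant $h^{-1}(s)s^2$, not the vanishing term $\beta\cdot\tfrac{h^{-1}(s)s}{\sqrt{d_\delta c}}$ that you need and that appears in \eqref{req3}. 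What the paper actually uses is the exact identity $\expn[(S_i^C)^2] = c$: since $(S_i^C)^2 = (xq - (c-x)/q)^2$ does not depend on $y_i$, there is no adversarial issue, and with $T_j$ i.i.d.\ equal to $q$ with probability $p$ and $-1/q$ otherwise one has $\expn[T_j] = 0$, $\expn[T_j^2] = 1$, so $\expn[(\sum_{j\in C} T_j)^2] = c$.

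The first-moment step is also not quite as you sketch: there is no useful lower bound on $\expn[S_i^C \mid p]$ pointwise in $p$ --- for intermediate $p$ the adversary, free to choose $y_i$ whenever $0 < x < c$, makes $S_i^C = -|xq - (c-x)/q|$, and the conditional expectation can be negative. The paper instead integrates over $p$ \emph{first}, computing $F_{1,x} = \expn_p\bigl[p^x(1-p)^{c-x}(xq - (c-x)/q)\bigr]$ exactly via the antiderivative $I(p) = p^x(1-p)^{c-x}$, and only then sums over $x$. The constant $\tfrac{2 - 4/d_\delta}{\pi}$ arises entirely from the two boundary terms $x=0$ and $x=c$, where the marking assumption fixes $y_i$ and each contributes $-\beta\,\tfrac{(1-\delta)^c - \delta^c}{\pi - 4\delta'}$ (this is where the symmetric score gains the factor $2$ over \eqref{btreq3}); the intermediate terms $0 < x < c$ are then controlled by $\sum_{x \ge 1}\binom{c}{x}\delta^x(1-\delta)^{c-x} \le \delta c$, producing the $4/d_\delta$ correction.
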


The proof of Lemma \ref{lem:Completeness} is quite lengthy and can be found in Appendix \ref{sec:LemmaCompleteness}. Using this Lemma we can easily prove Theorem \ref{thm:Completeness}.

\begin{proof}[Theorem \ref{thm:Completeness}]
We will prove that for a coalition of size $c$, with probability at least $1 - \eps_2$ the algorithm will accuse at least one of the colluders. Note that if no colluders are accused, then the score of each colluder is below $Z$. Hence if the total coalition score $S$ exceeds $cZ$, then at least one of the pirates is accused. So to prove $\eps_2$-soundness, it suffices to prove that $\pr[S < cZ] \leq \eps_2$.

We first use the Markov inequality and Lemma \ref{lem:Completeness} with $\beta = s\sqrt{\delta}/c > 0$ to get
\begin{align*}
\pr[\sigma(\vec{y}) \cap C = \emptyset] \leq \pr[S < cZ] = \pr\left[e^{-\beta S} > e^{-\beta c Z}\right] \leq e^{\beta c Z} \expn_{\vec{y},X,\vec{p}}\left[e^{-\beta S}\right] \leq e^{\beta c Z - g \beta \ell}.
\end{align*}
Since we want to prove that $\pr[S < cZ] \leq e^{-\eta k} \leq (\eps_1/n)^{\eta} = \eps_2$, the proof would be complete if $\beta c Z - g \beta \ell \leq -\eta k$. Filling in $\beta = s \sqrt{\delta}/c, \ell = d_{\ell} c^2 k, Z = d_z c k, \delta = 1/(d_{\delta} c)$ and writing out both sides, we get
\begin{align*}
g d_{\ell} - d_z \geq \eta \sqrt{\frac{d_{\delta}}{s^2 c}}.
\end{align*}
This is exactly inequality \eqref{req4}, which was assumed to hold. This completes the proof.
\end{proof}

Compared to \cite{blayer08}, we see that instead of using \eqref{btreq3}, we now need that inequality \eqref{req3} holds. Comparing these two inequalities, we see that a term $\frac{1}{\pi}$ has changed to a $\frac{2}{\pi}$, and a term $\frac{2}{d_{\delta} \pi}$ has changed to a $\frac{4}{d_{\delta} \pi}$. The most important change is the $\frac{1}{\pi}$ changing to a $\frac{2}{\pi}$, since that term is the most dominant factor (and the only positive term) on the left hand side of \eqref{req3}. By increasing this by a factor $2$, we get that $g \leq \frac{2}{\pi}$ instead of $g \leq \frac{1}{\pi}$. Especially for large $c$, this will play an important role, and it will basically be the reason why the required codelength can then be reduced by a factor $4$, compared to Blayer and Tassa's analysis for the asymmetric scheme.

While the other change (the $\frac{2}{d_{\delta} \pi}$ changing to $\frac{4}{d_{\delta} \pi}$) does not have a big impact on the optimal choice of parameters for large $c$, this change does influence the required codelength for smaller $c$. Because of this change, we now subtract more from the left hand side of \eqref{req3}, so that the value of $g$ is bounded sharper from above. This means that for finite $c$ we cannot reduce the codelength of Blayer and Tassa by a factor $4$, but only by a factor slightly less than $4$.

Finally, after using \eqref{req3} in the proof above, the analysis remained the same as in \cite{blayer08}. So under the same assumption \eqref{req4} as in \cite{blayer08}, we could also complete the proof for the symmetric Tardos scheme.

\section{Optimization}
\label{sec:Optimization}

Similar to the analysis done by Blayer and Tassa in \cite[Section 2.4]{blayer08}, we also investigate the optimal choice of parameters such that all requirements are satisfied, and $d_{\ell}$ is minimized. As only one of the inequalities has changed, and it changed only on two positions, the formulas for the optimal values of $d_{\delta}, d_{\alpha}, d_z, d_{\ell}$ in the following Theorem are almost the same as in \cite[Section 2.4.5]{blayer08}. We do not give a proof here, as it would be nearly identical to the analysis done in \cite[Section 2.4]{blayer08}.

\begin{theorem} \label{thm:Optimization}
Let $\eta, c$ be given, and let $r,s,g$ be fixed, satisfying $r \in (\frac{1}{2},\infty), s \in (0, \infty), g \in (0, \frac{2}{\pi})$. Then the optimal choice of $d_{\delta},d_{\alpha},d_z,d_{\ell}$, minimizing $d_{\ell}$ and satisfying conditions \eqref{req1},\eqref{req2},\eqref{req3},\eqref{req4}, is given by:
\begin{align*}
\hat{d}_{\delta} & = \left(\frac{1}{\frac{4}{\pi} - 2g}\left(\sqrt{\frac{(h^{-1}(s) s)^2}{c} + \frac{16}{\pi}\left(\frac{2}{\pi} - g\right)} + \frac{h^{-1}(s) s}{\sqrt{c}}\right)\right)^2, \tag{O1} \label{opt1} \\
\hat{d}_{\alpha} & = \max\left(\frac{\sqrt{\hat{d}_{\delta}}}{h(r) \sqrt{c}}, \frac{r}{g} + \sqrt{\left(\frac{r}{g}\right)^2 + \frac{r}{g} \eta \sqrt{\frac{\hat{d}_{\delta}}{s^2 c}}}\right), \tag{O2} \label{opt2} \\
\hat{d}_z & = \frac{g \hat{d}_{\alpha}^2 + r \eta \sqrt{\frac{\hat{d}_{\delta}}{s^2 c}}}{g \hat{d}_{\alpha} - r}, \tag{O3} \label{opt3} \\
\end{align*}
\begin{align*}
\hat{d}_{\ell} & = \frac{\eta \sqrt{\frac{\hat{d}_{\delta}}{s^2 c}} + \hat{d}_z}{g}. \tag{O4} \label{opt4}
\end{align*}
\end{theorem}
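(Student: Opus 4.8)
The plan is to optimise the four parameters $d_\delta,d_\alpha,d_z,d_\ell$ sequentially — first $d_\delta$, then $d_z$ (equivalently $d_\ell$), and finally $d_\alpha$ — using only the monotonicity of the four constraints \eqref{req1}, \eqref{req2}, \eqref{req3}, \eqref{req4} in the variables they contain, so that at each stage the binding constraint may be assumed to hold with equality. This is the same monotonicity-based (Lagrangian-free) argument used in \cite[Section 2.4]{blayer08}, which is why the resulting formulas barely differ from theirs.

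First I would fix $d_\delta$. Only \eqref{req3} bounds $d_\delta$ from below, and the left-hand side of \eqref{req3} is strictly increasing in $d_\delta$; meanwhile, increasing $d_\delta$ enlarges both the lower bound on $d_\alpha$ coming from \eqref{req1} and the right-hand side of \eqref{req4}, whereas $d_\delta$ does not appear in \eqref{req2}. Hence the feasible region for $(d_\alpha,d_z,d_\ell)$ is largest, and the optimal $d_\ell$ smallest, when $d_\delta$ is as small as \eqref{req3} permits, i.e.\ with equality. Writing $t = 1/\sqrt{d_\delta}$ converts that equality into the quadratic $\tfrac{4}{\pi}t^2 + \tfrac{h^{-1}(s)s}{\sqrt{c}}\,t - \bigl(\tfrac{2}{\pi}-g\bigr) = 0$, whose positive root exists precisely because $g < \tfrac{2}{\pi}$; solving for $t$, rationalising the numerator, and inverting $t \mapsto t^{-2}$ gives \eqref{opt1}.

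Next, with $d_\delta = \hat d_\delta$ fixed, set $A := \eta\sqrt{\hat d_\delta/(s^2 c)}$. Then \eqref{req4} reads $d_\ell \ge (d_z + A)/g$, and since decreasing $d_\ell$ only relaxes \eqref{req2}, at the optimum \eqref{req4} holds with equality, so $d_\ell = (d_z+A)/g$ and minimising $d_\ell$ amounts to minimising $d_z$. Substituting this expression into \eqref{req2} and rearranging gives $d_z\bigl(1 - \tfrac{r}{g d_\alpha}\bigr) \ge d_\alpha + \tfrac{rA}{g d_\alpha}$, which is feasible only if $g d_\alpha > r$ and then forces $d_z \ge \phi(d_\alpha) := \dfrac{g d_\alpha^2 + rA}{g d_\alpha - r}$, while \eqref{req1} contributes the extra bound $d_\alpha \ge \sqrt{\hat d_\delta}/(h(r)\sqrt{c})$. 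I would then minimise $\phi$ over this feasible range: from $\phi'(d_\alpha) = \dfrac{g^2 d_\alpha^2 - 2gr d_\alpha - grA}{(g d_\alpha - r)^2}$ one reads off that the numerator has a single positive root $d_\alpha^\star = \tfrac{r}{g} + \sqrt{\bigl(\tfrac{r}{g}\bigr)^2 + \tfrac{r}{g}A}$, with $\phi$ strictly decreasing on $(r/g, d_\alpha^\star)$ and strictly increasing afterwards, so the constrained minimiser is $\hat d_\alpha = \max\bigl(d_\alpha^\star,\ \sqrt{\hat d_\delta}/(h(r)\sqrt{c})\bigr)$, which is \eqref{opt2}. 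Back-substituting $\hat d_\alpha$ into $\phi$ gives $\hat d_z$ as in \eqref{opt3}, and then $\hat d_\ell = (\hat d_z + A)/g$ gives \eqref{opt4}.

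The one delicate point — and where I expect to need the most care — is the coupling between \eqref{req2} and \eqref{req4}: after eliminating $d_\ell$ via \eqref{req4}, the variable $d_z$ reappears on both sides of \eqref{req2}, so the step that produces $\phi$ is valid only under the additional feasibility condition $g d_\alpha > r$, and I would have to verify this is automatically satisfied at the claimed optimum — which it is, since $d_\alpha^\star > r/g$ always, and when instead the \eqref{req1}-bound is the larger of the two it still exceeds $d_\alpha^\star$ and hence $r/g$. Everything else is monotonicity together with a routine single-variable minimisation.
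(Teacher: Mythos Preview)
Your proposal is correct and follows precisely the approach the paper intends: the paper omits the proof entirely, stating only that it ``would be nearly identical to the analysis done in \cite[Section 2.4]{blayer08}'', and your sequential monotonicity-and-elimination argument is exactly that Blayer--Tassa analysis adapted to the modified constraint \eqref{req3}. Your handling of the feasibility condition $g\,d_\alpha > r$ and of the case split that produces the $\max$ in \eqref{opt2} cleanly fills in the one step that is not purely mechanical.
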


So to find the optimal septuple $(\hat{r},\hat{s},\hat{g},\hat{d}_{\delta},\hat{d}_{\alpha},\hat{d}_z,\hat{d}_{\ell})$ for given $c,\eta$, satisfying all requirements and minimizing $\hat{d}_{\ell}$, one only has to find the triple $(r,s,g)$ with $r \in (\frac{1}{2}, \infty), s \in (0, \infty)$ and $g \in (0, \frac{2}{\pi})$ that minimizes the right hand side of \eqref{opt4}.

\paragraph{Example} An optimal solution to \eqref{req1},\eqref{req2},\eqref{req3},\eqref{req4} for $c \geq 2$ and $\eta = 1$, minimizing $d_{\ell}$, is given by
\begin{align*}
d_{\ell} = 23.79, \quad d_z = 8.06, \quad d_{\delta} = 28.31, \quad d_{\alpha} = 4.58, \quad r = 0.67, \quad s = 1.07, \quad g = 0.49.
\end{align*}
This means that with these constants, we can prove soundness and completeness for all $c \geq 2$ and $\eta \leq 1$, with a codelength of $\ell = 23.79 c^2 \ln(n/\eps_1)$. Compared to the original Tardos scheme, which had a codelength of $\ell = 100 c^2 \left\lceil \ln(n/\eps_1) \right\rceil$, this gives an improvement of a factor more than $4$. Furthermore we can prove that this scheme is $\eps_1$-sound and $\eps_2$-complete for any value of $c \geq 2$ and $\eta \leq 1$, while Tardos' original proof only works for $c \geq 2$ and $\eta \leq \sqrt{c}/4$.

\paragraph{Example} In practice, one usually has $\eta \ll 1$ instead of $\eta = 1$. For example, it could be that $\eps_2 = 1/2$ is sufficient, while $\eps_1 = 10^{-3}$ is desired and there are $n = 10^6$ users, so that $\eta \approx 0.033$. Then the optimizations give us $d_{\ell} \approx 10.89$ for $c = 2$. So with this larger value of $\eps_2$, a codelength of $\ell = 10.89 c^2 \ln(n/\eps_1)$ is sufficient to prove the soundness and completeness properties for any $c \geq 2$. This is then already a factor more than $9$ improvement compared to the original Tardos scheme.

If we let $c$ increase in inequalities \eqref{opt1},\eqref{opt2},\eqref{opt3},\eqref{opt4}, i.e. if we only want provable soundness and completeness for $c \geq c_0$ for some $c_0 > 2$, then one can easily see that the inequalities become weaker and 

\begin{figure}[h]
\includegraphics[width=\textwidth]{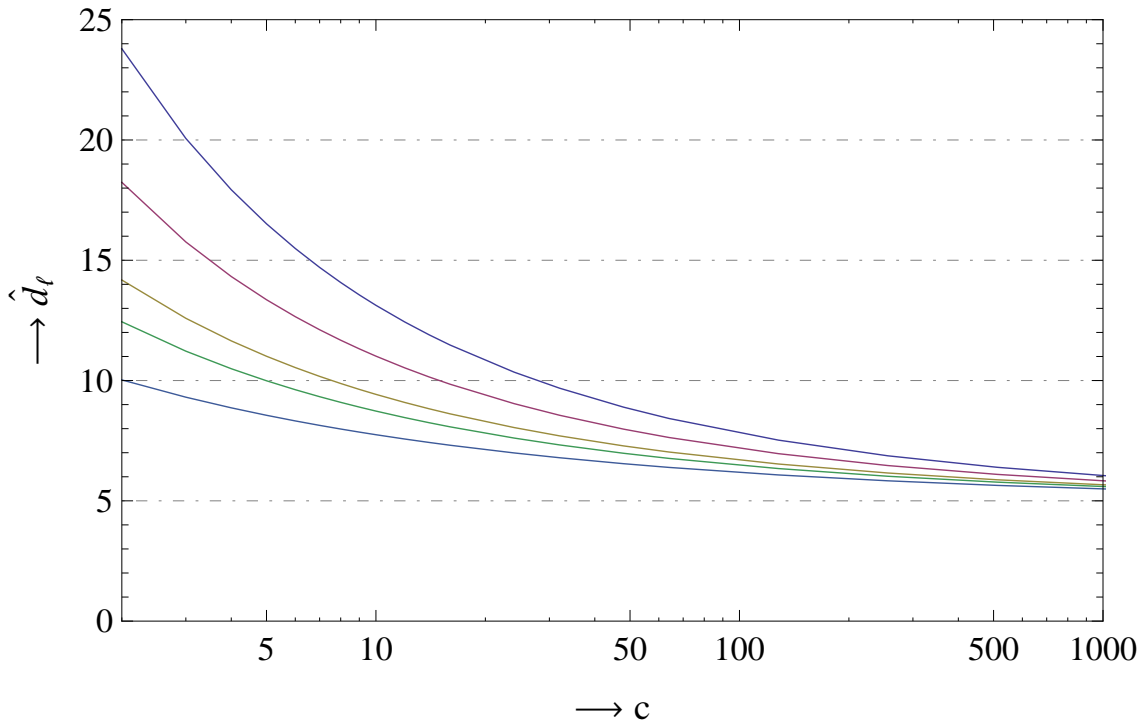}
\caption{Optimal values of $d_{\ell}$, for several values of $c$ between $2$ and $1000$. The different lines correspond to the cases $\eta = 1, 0.5, 0.2, 0.1, 0.01$ respectively, where higher values of $\eta$ correspond to higher values of $\hat{d}_{\ell}$.}
\label{fig:Fig1}
\end{figure}

\noindent
an even shorter codelength can be achieved. Figure \ref{fig:Fig1} shows the optimal values of $d_{\ell}$ against different values of $c$, for several values of $\eta$. One can see that for large $c$, a codelength of $\ell < 6 c^2 \ln(n/\eps_1)$ can be sufficient. In the next Section, we will see that for large $c$, the optimal values of $d_{\ell}$ will converge to $\frac{\pi^2}{2} \approx 4.93$.

\section{Asymptotics}
\label{sec:Asymptotics}

Here we show that with the symmetric Tardos construction, for $c \rightarrow \infty$ we can prove soundness and completeness for $d_{\ell} = \frac{\pi^2}{2} + O\left(c^{-1/3}\right)$. We calculate the optimal first order error term explicitly, and also show explicitly the dependence on $\eta$, as the choice of $\eta$ may depend on the particular application. Note that at least $\eta \leq 1$, but it may be considerably smaller and it may depend on $c$ as well.

\begin{theorem} \label{thm:FirstOrder}
Let $\gamma = \left(\frac{2}{3\pi} \right)^{2/3} \approx 0.35577$. The optimal asymptotic (for $c \to \infty$) value for $d_{\ell}$, and the accompanying values for $d_z, d_{\delta}$, are
\begin{align}
  d_{\ell} &= \frac{\pi^2}{2} \left(1 + \left(3 \gamma + 18 \gamma \frac{\eta}{\log c} (1 + o(1)) \right) c^{-1/3} \right), \\
  d_z &= \pi \left(1 + \left(\frac{5}{2} \gamma + 6 \gamma \frac{\eta}{\log c} (1 + o(1)) \right) c^{-1/3} \right), \\
  d_{\delta} &= \frac{4}{\gamma} \left(1 - 3 \frac{\eta}{\log c}(1 + o(1)) \right) c^{1/3},
\end{align}
and the choices for $g, r, s$ leading to them are given by
\begin{align}
  g &= \frac{2}{\pi} \left(1 - \left(\frac{1}{2}\gamma + 3 \gamma \frac{\eta}{\log c} (1 + o(1)) \right) c^{-1/3} \right), \\
  r &= \frac{1}{2} \left(1 + \left(2 \gamma - 6 \gamma \frac{\eta}{\log c} (1 + o(1)) \right) c^{-1/3} \right), \\
  s &= \log \left(\frac{24}{\pi^2\gamma} \frac{\eta}{\log c} (1 + o(1)) c^{1/3} \right).
\end{align}
\end{theorem}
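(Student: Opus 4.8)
The plan is to start from the optimization formulas \eqref{opt1}--\eqref{opt4} of Theorem \ref{thm:Optimization} and perform a careful asymptotic expansion as $c \to \infty$, treating the free parameters $r, s, g$ as functions of $c$ to be chosen optimally. The guiding intuition is that in the limit $c \to \infty$ we must have $g \to \tfrac{2}{\pi}$, $r \to \tfrac12$, $d_\delta \to \infty$ and $s \to \infty$ (the last so that $h^{-1}(s)s = (e^s - 1 - s)/s \sim e^s/s$ stays controlled against the $\sqrt c$ in the denominators). So I would write $g = \tfrac{2}{\pi}(1 - g_1)$, $r = \tfrac12(1 + r_1)$ with $g_1, r_1 = o(1)$, and keep $s$ and $d_\delta$ as separate ``large'' quantities, then substitute into \eqref{opt1}--\eqref{opt4} and expand.

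The key steps, in order: (1) From \eqref{opt1}, expand $\hat d_\delta$ for small $g_1$ and large $c$; the dominant balance gives $\hat d_\delta \approx \bigl(\tfrac{4}{\pi}\bigr)/\bigl(\tfrac{4}{\pi} - 2g\bigr) \cdot (\text{stuff})$, i.e. $\hat d_\delta \sim \frac{2}{g_1}$ to leading order, plus a correction involving $h^{-1}(s)s/\sqrt c$. (2) From \eqref{opt2}, the two terms in the $\max$ must be asymptotically balanced at the optimum (otherwise one could decrease $d_\ell$); setting them equal relates $r_1$ (through $h(r) = h(\tfrac12(1+r_1))$, where $h(\tfrac12) = \infty$ so $h(r) \to \infty$ as $r_1 \to 0$) to $\sqrt{\hat d_\delta}/\sqrt c$ and to $r/g$. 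Near $r = \tfrac12$ one has $h^{-1}(x) \to \tfrac12$, so $h(\lambda) \to \infty$ as $\lambda \to \tfrac12^+$; I would need the precise rate, $h(\tfrac12 + u) \sim$ (something like $\log(1/u)$ or a power), obtained by inverting $h^{-1}(x) = (e^x-1-x)/x^2 = \tfrac12 + \tfrac{x}{6} + \cdots$ near $x=0$, giving $h(\tfrac12 + u) \approx 6u$ for small $u$ — wait, that is the small-$x$ branch; for the relevant regime $x = h(r)$ is large, so I instead use $h^{-1}(x) \sim e^x/x^2$ for large $x$, hence $h(\lambda) \sim \log(\lambda) + 2\log\log(\lambda) + \cdots$. (3) Substitute the resulting $\hat d_\alpha$ into \eqref{opt3} and then \eqref{opt4}, expand $\hat d_\ell$ to the form $\tfrac{\pi^2}{2}(1 + (\text{const})c^{-1/3} + \cdots)$, where the exponent $-1/3$ emerges from optimizing the trade-off: the $g_1$ correction pushes $d_\ell$ up like $g_1$ while the cutoff/$d_\delta$ correction pushes it up like $h^{-1}(s)s/\sqrt{d_\delta c} \sim 1/(g_1 \cdot (\text{large}) )$, and balancing a term $\sim g_1$ against a term $\sim 1/(g_1^{1/2}\sqrt c)$-type expression yields $g_1 \sim c^{-1/3}$, with the constant $\gamma = (2/(3\pi))^{2/3}$ popping out of the AM-GM/Lagrange optimization. (4) Track the $\eta$-dependence: $\eta$ enters only through \eqref{req4}/\eqref{opt3}--\eqref{opt4} in the combination $\eta\sqrt{d_\delta/(s^2 c)}$, and since $s \to \infty$ logarithmically while $d_\delta \sim c^{1/3}$, this term is $\eta/(s\sqrt{c/d_\delta}) \sim \eta c^{-1/3}/s$; choosing $s = \log(\cdots \eta c^{1/3}/\log c)$ makes this consistent, which is exactly the stated value of $s$. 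Then read off $d_z$ from \eqref{opt3} and $d_\delta$ from \eqref{opt1} to the claimed orders, and verify the claimed coefficients ($3\gamma$, $\tfrac52\gamma$, $\tfrac12\gamma$, $2\gamma$, the $18\gamma\eta/\log c$, etc.) by matching the expansions.

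The main obstacle I expect is the bookkeeping in step (3)--(4): one must simultaneously expand in the small parameter $c^{-1/3}$ and in the large parameters $s \sim \log c$ and $d_\delta \sim c^{1/3}$, and the ``$(1 + o(1))$'' factors multiplying $\eta/\log c$ have to be justified uniformly — in particular showing that the optimization genuinely selects the $c^{-1/3}$ scaling (so that no finer or coarser scaling does better) requires a clean Lagrange-multiplier or AM-GM argument on the reduced two-variable problem ``minimize $d_\ell(g_1, s)$'', and then a verification that the stationary point is a minimum and lies in the allowed region $g \in (0, \tfrac{2}{\pi})$, $r > \tfrac12$, $s > 0$. A secondary technical point is pinning down the large-$x$ asymptotics of $h^{-1}$ and hence $h$ accurately enough to get the coefficient in $r$ right; I would handle this by working with $h^{-1}$ directly wherever possible (since it is elementary) and only invoking $h$ at the very end. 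Once the scaling $c^{-1/3}$ and the constant $\gamma$ are established, the remaining coefficients follow by routine Taylor expansion of \eqref{opt1}--\eqref{opt4}, which I would not grind through in detail.
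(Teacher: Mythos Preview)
Your plan is essentially the paper's: parametrize $g$, $r$, $s$ as perturbations of their limits $\tfrac{2}{\pi}$, $\tfrac12$, $\infty$, substitute into (O1)--(O4), and optimize the first-order coefficient. The paper writes $g = \tfrac{2}{\pi} - K_g c^{-1/3}$, $h(r) = K_r c^{-1/3}$, $\tfrac{1}{s\, h^{-1}(s)} = K_s c^{-1/3}$, derives $d_\ell = \tfrac{\pi^2}{2} + L_0 c^{-1/3} + O(c^{-2/3})$ with $L_0 = \tfrac{\pi^3}{2} K_g + \tfrac{\pi^2}{6} K_r + \pi D_0 \tfrac{\eta}{s}$, and then minimizes the leading part $\tfrac{\pi^3}{2} K_g + \tfrac{\pi^2}{6} K_r$ under the constraint $K_g K_r^2 \ge \tfrac{16}{\pi^3}$ (which encodes that the second branch of the $\max$ in (O2) is attainable, i.e.\ your ``balancing'' intuition); this Lagrange problem yields $K_g = \gamma/\pi$, $K_r = 6\gamma$, and the $\eta/\log c$ corrections come from a second pass that retains the $\pi D_0 \tfrac{\eta}{s}$ term and solves $\partial L_0/\partial s = \partial L_0/\partial K_r = 0$.

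There is, however, a concrete error in your step (2) that would derail the $r$-expansion. You write ``$h(\tfrac12) = \infty$ so $h(r) \to \infty$ as $r_1 \to 0$,'' which is backwards: since $h^{-1}(0^+) = \lim_{x \to 0}(e^x-1-x)/x^2 = \tfrac12$ and $h^{-1}$ is increasing on $(0,\infty)$, we have $h(\tfrac12^+) = 0$, not $\infty$. Your own computation $h(\tfrac12 + u) \approx 6u$ is correct and is precisely the relation you need --- the small-$x$ branch is the relevant one, because $h(r) = O(c^{-1/3}) \to 0$: from (S1) with $d_\alpha \to \tfrac{\pi}{2}$ and $d_\delta \sim c^{1/3}$ one gets $h(r) \gtrsim \sqrt{d_\delta/c}\,/d_\alpha \sim c^{-1/3}$. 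Your subsequent switch to the large-$x$ asymptotics $h^{-1}(x) \sim e^x/x^2$ puts you in the wrong regime and would give the wrong relation between $r$ and the other parameters (and hence wrong coefficients in $r$, $d_\alpha$, $d_z$). The paper sidesteps this by parametrizing $h(r)$ directly as $K_r c^{-1/3}$ and then reading off $r = h^{-1}(K_r c^{-1/3}) = \tfrac12 + \tfrac16 K_r c^{-1/3} + O(c^{-2/3})$; you should do the same.
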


We have optimized for $d_{\ell}$, and one could get slightly better error terms for $d_z$ or $d_{\delta}$. For example, optimizing for $d_z$ yields an optimal value of $\pi(1 + (3 \gamma' + 9 \gamma' \frac{\eta}{\log c} (1 + o(1))) c^{-1/3})$, for a suboptimal $d_{\ell}$ of $\frac{\pi^2}{2}(1 + (4 \gamma' + 15 \gamma' \frac{\eta}{\log c} (1 + o(1))) c^{-1/3})$, where $\gamma' = 2^{-1/3} \gamma$.

It is remarkable that the error terms for $d_{\ell}$ and $d_z$ scale with $c^{-1/3}$, while \v{S}kori\'{c} et al.\ found error terms scaling with $c^{-1/2}$. It turns out that in \cite{skoric08} an error term in $\widetilde{\mu}$ was not taken into account, and if one does do, their analysis for the binary case will also yield error terms scaling with $c^{-1/3}$. Also note that $d_{\delta}$ (related to the cutoff) scales with $c^{1/3}$, i.e.\ the cutoff $\frac{1}{d_{\delta} c}$ scales with $c^{-4/3}$ rather than with $c^{-1}$ as one might have guessed.

An immediate consequence of Theorem \ref{thm:FirstOrder} is the following result, which shows that asymptotically we will achieve codelengths of $\ell \approx 4.93 c^2 \ln(n/\eps_1)$, i.e. codelengths that are about $4.93\%$ of Tardos' original codelengths.

\begin{corollary} \label{cor:Asymptotics}
For $c \to \infty$ the above construction gives an $\eps_1$-sound and $\eps_2$-complete scheme with parameters
\begin{align*}
\ell \to \frac{\pi^2}{2} c^2 \ln(n/\eps_1), \quad Z \to \pi c \ln(n/\eps_1), \quad \delta \to \frac{\gamma}{4} c^{-4/3}.
\end{align*}
\end{corollary}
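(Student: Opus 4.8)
The plan is to read the corollary off directly from Theorem \ref{thm:FirstOrder} together with the definitions of $\ell$, $Z$ and $\delta$ in the construction of Subsection \ref{sub:Construction-Tardos}. First I would record that, by Theorem \ref{thm:FirstOrder}, the optimal parameters satisfy $d_{\ell} = \frac{\pi^2}{2}\left(1 + O\left(c^{-1/3}\right)\right)$, $d_z = \pi\left(1 + O\left(c^{-1/3}\right)\right)$ and $d_{\delta} = \frac{4}{\gamma}\left(1 + o(1)\right) c^{1/3}$. Since always $\eta \leq 1$, we have $\frac{\eta}{\log c}\, c^{-1/3} \to 0$, so the bracketed error terms all vanish as $c \to \infty$; in particular $d_{\ell} \to \frac{\pi^2}{2}$, $d_z \to \pi$, and $d_{\delta} \to \infty$, so in particular the standing constraint $d_{\delta} > 1$ is eventually satisfied.

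Next I would substitute these asymptotics into $\ell = d_{\ell} c^2 k$, $Z = d_z c k$ and $\delta = 1/(d_{\delta} c)$, where $k = \ln(n/\eps_1)$. This immediately gives $\ell \sim \frac{\pi^2}{2}\, c^2 \ln(n/\eps_1)$ and $Z \sim \pi c \ln(n/\eps_1)$. For the cutoff, plugging $d_{\delta} = \frac{4}{\gamma}\left(1 + o(1)\right) c^{1/3}$ into $\delta = 1/(d_{\delta} c)$ yields $\delta = \frac{\gamma}{4}\left(1 + o(1)\right) c^{-4/3}$, i.e.\ $\delta \sim \frac{\gamma}{4}\, c^{-4/3}$. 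This last step — composing the asymptotics of $d_{\delta}$ with the map $d_{\delta} \mapsto 1/(d_{\delta} c)$, so that a growth rate $c^{1/3}$ turns into a decay rate $c^{-4/3}$ — is the one spot requiring a (short, routine) computation, and is the only real ``obstacle''.

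Finally I would note that soundness and completeness come for free. Theorem \ref{thm:FirstOrder} supplies, alongside $d_{\ell}, d_z, d_{\delta}, d_{\alpha}$, explicit choices of $r \in (\tfrac12, \infty)$, $s \in (0,\infty)$ and $g \in (0, \tfrac{2}{\pi})$ for which the requirements \eqref{req1}, \eqref{req2}, \eqref{req3} and \eqref{req4} all hold for $c$ sufficiently large; indeed these are precisely the parameters obtained by the optimization of Theorem \ref{thm:Optimization}. Hence Theorem \ref{thm:Soundness} gives $\eps_1$-soundness and Theorem \ref{thm:Completeness} gives $\eps_2$-completeness, which together with the limits above proves the claim.
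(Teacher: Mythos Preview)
Your proposal is correct and is exactly what the paper does: it states the corollary as an immediate consequence of Theorem~\ref{thm:FirstOrder}, obtained by plugging the asymptotic values of $d_{\ell}, d_z, d_{\delta}$ into the definitions $\ell = d_{\ell} c^2 \ln(n/\eps_1)$, $Z = d_z c \ln(n/\eps_1)$, $\delta = 1/(d_{\delta} c)$. Your added remark that soundness and completeness are inherited from Theorems~\ref{thm:Soundness} and~\ref{thm:Completeness} via the parameter choices of Theorem~\ref{thm:FirstOrder} is a nice clarification the paper leaves implicit.
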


This proves that our analysis is asymptotically tight, since for large $c$ we achieve the optimal codelength of $\ell = (\frac{\pi^2}{2} + o(1)) c^2 \ln(n/\eps_1)$.

\paragraph{Remark} In the proof of Theorem \ref{thm:FirstOrder}, we use that $r$ can be taken in the neighborhood of $\frac{1}{2}$ to get the final result, $d_{\ell} = \frac{\pi^2}{2} + O(c^{-1/3})$. In \cite{skoric08} however, no such variable $r$ was used, as it was simply fixed at $1$. If they had taken $r$ as a parameter in their analysis and had taken it close to $\frac{1}{2}$ in the asymptotic case, then they would have obtained the same asymptotic results as we did above, but still with different first order terms.

\section*{Acknowledgment}

The authors would like to thank Boris \v{S}kori\'{c}, Jeroen Doumen and Peter Roelse for useful discussions and valuable comments.

\appendix
\small

\section{Integral codelengths}
\label{sec:IntegralCodelengths}

One detail we have not taken care of and which is often "{}swept under the carpet"{}\ in other literature, is that the codelength $\ell$ by definition has to be integral. In the construction of the Tardos scheme however, we said we take $\ell = d_{\ell} c^2 \ln(n/\eps_1)$, while $\ln(n/\eps_1)$ and $d_{\ell}$ may not be integral. To solve the problem of non-integral codelengths, Tardos rounded up $\ln(n/\eps_1)$ and took $d_{\ell} = 100$ in his original scheme. Blayer and Tassa also rounded up $\ln(n/\eps_1)$ and took $d_{\ell} = 85$, presumably also to guarantee that $\ell$ is integral\footnote{Numerical optimizations show that even a parameter set with $d_{\ell} \approx 81.25$ exists that satisfies all requirements of Blayer and Tassa.}. However, rounding up $d_{\ell}$ and $\ln(n/\eps_1)$ could drastically increase the codelength. For example, suppose $n = 10^6, \ \eps_1 = \eps_2 = 0.01$, and $c = 25$. Then $\eta = 0.25$ and $\ln(n/\eps_1) \approx 18.42$, and numerical optimizations give $d_{\ell} \approx 8.18$. Without rounding we would get a codelength of $\ell \approx 94155$, while with rounding we get $\ell' = 106875$. So then the codelength $\ell'$ is more than $13.5\%$ higher than $\ell$, only because we rounded up both $\ln(n/\eps_1)$ and $d_{\ell}$.

Instead of rounding up inbetween, rounding up the entire codelength to $\ell' = \lceil d_{\ell} c^2 \ln(n/\eps_1) \rceil$ makes more sense. The codelength is then increased by less than $1$ symbol, so we hardly notice the difference in the codelength. However, the proofs we give in Section \ref{sec:Soundness} and \ref{sec:Completeness} are based on $\ell = d_{\ell} c^2 \ln(n/\eps_1)$, which corresponds to using $d_{\ell} = \ell / (c^2 \ln(n/\eps_1))$. If we take $\ell' = \lceil \ell \rceil$, then we get $d_{\ell}' = \lceil \ell \rceil / (c^2 \ln(n/\eps_1)) > d_{\ell}$ (for $\ell \notin \mathbb{N}$), so that with the same parameters $Z$ and $\delta$ we may not be able to prove soundness and completeness anymore. In particular, equation \eqref{req2} might not be satisfied if $d_{\ell}$ is increased, since \eqref{req2} implies that $4 r d_{\ell} \leq d_z^2$. Increasing the left hand side may violate this bound, if we do not also increase $d_z$.

The following Theorem takes care of this minor problem, by showing that if we can find a solution to \eqref{req1}, \eqref{req2}, \eqref{req3}, \eqref{req4} with a fractional codelength $\ell$, then we can also find a solution to these inequalities with the integral codelength $\lceil \ell \rceil$. In particular, we show which scheme parameters $\ell, \ Z$ and $\delta$ one could take to achieve this result.

\begin{theorem}
Let the Tardos scheme be constructed as in Subsection \ref{sub:Construction-Tardos}, and let $(d_{\ell}, d_z, d_{\delta}, d_{\alpha}, r, s, g)$ be a septuple satisfying conditions \eqref{req1}, \eqref{req2}, \eqref{req3}, \eqref{req4} giving scheme parameters $\ell_0 = d_{\ell} c^2 \ln(n/\eps_1), Z_0 = d_z c \ln(n/\eps_1)$ and $\delta_0 = 1/(d_{\delta} c)$. Then the Tardos scheme from Subsection \ref{sub:Construction-Tardos} with parameters
\begin{align}
\ell = \lceil \ell_0 \rceil, \quad Z = Z_0 + \frac{g}{c} \left(\lceil \ell_0 \rceil - \ell_0\right), \quad \delta = \delta_0, \label{intell}
\end{align}
is $\eps_1$-sound and $\eps_2$-complete.
\end{theorem}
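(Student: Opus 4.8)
The plan is to exhibit a septuple of auxiliary parameters that still satisfies all of \eqref{req1}--\eqref{req4} and produces exactly the scheme parameters in \eqref{intell}, so that Theorems \ref{thm:Soundness} and \ref{thm:Completeness} can be applied verbatim. Write $k = \ln(n/\eps_1)$ and set $\Delta = \lceil \ell_0 \rceil - \ell_0 \in [0,1)$, so the new scheme has $\ell = \ell_0 + \Delta$, $Z = Z_0 + \tfrac{g}{c}\Delta$ and $\delta = \delta_0$; in terms of $d$-parameters this is the construction of Subsection \ref{sub:Construction-Tardos} run with $d_\ell' = d_\ell + \tfrac{\Delta}{c^2 k}$, $d_z' = d_z + \tfrac{g\Delta}{c^2 k}$, and with $d_\delta, d_\alpha, r, s, g$ unchanged. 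The one identity I would record at the outset is $d_z' - d_z = g\,(d_\ell' - d_\ell)$: the offset was bumped up by precisely the amount that ties the change in $d_z$ to the change in $d_\ell$ through the factor $g$.

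Next I would dispose of the three conditions that cost nothing. Conditions \eqref{req1} and \eqref{req3} involve only $d_\delta, d_\alpha, r$ and $d_\delta, s, g$ respectively, all of which are unchanged, so they hold exactly as before. For the completeness condition \eqref{req4}, plugging in the primed parameters and using $d_z' - d_z = g(d_\ell' - d_\ell)$ gives $g d_\ell' - d_z' = g d_\ell - d_z$, while the right-hand side $\eta\sqrt{d_\delta/(s^2 c)}$ is unchanged; hence \eqref{req4} still holds. (Equivalently, in the notation of the proof of Theorem \ref{thm:Completeness}, with $\beta = s\sqrt\delta/c$ one has $\beta c Z - g\beta\ell = \beta c Z_0 - g\beta\ell_0$, because the two $\Delta$-contributions $\beta g\Delta$ cancel — this is exactly the reason for the particular increment chosen for $Z$.) So the modified scheme is $\eps_2$-complete.

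The only part carrying real content is the soundness condition \eqref{req2}. Substituting the primed parameters,
\[
\frac{d_z'}{d_\alpha} - \frac{r d_\ell'}{d_\alpha^2}
 = \left(\frac{d_z}{d_\alpha} - \frac{r d_\ell}{d_\alpha^2}\right)
 + \frac{\Delta}{c^2 k\, d_\alpha}\left(g - \frac{r}{d_\alpha}\right),
\]
so \eqref{req2} is preserved as soon as $g \geq r/d_\alpha$. To get this I would chain the two original conditions: \eqref{req2} for the starting septuple gives $\tfrac{d_z}{d_\alpha} - \tfrac{r d_\ell}{d_\alpha^2} \geq 1 > 0$, hence $d_z d_\alpha > r d_\ell$ and therefore $r/d_\alpha < d_z/d_\ell$; and \eqref{req4} for the starting septuple gives $g d_\ell - d_z \geq \eta\sqrt{d_\delta/(s^2 c)} \geq 0$, hence $g \geq d_z/d_\ell$. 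Together $r/d_\alpha < d_z/d_\ell \leq g$, which is exactly what is needed, so \eqref{req2} holds for the primed septuple and the scheme is $\eps_1$-sound.

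Finally I would note the routine admissibility checks — $d_\ell', d_z' > 0$, the constraints $r > \tfrac12$ and $d_\delta > 1$ are unchanged, and $\delta = \delta_0$ leaves the sampling distribution \eqref{dist}, the bound $|S_{ji}| < \sqrt{d_\delta c}$ used in Lemma \ref{lem:Soundness}, and all of Lemma \ref{lem:Completeness} untouched — so the primed septuple is a legitimate parameter choice and Theorems \ref{thm:Soundness} and \ref{thm:Completeness} deliver the claim. I expect the only genuine obstacle to be spotting the chain $r/d_\alpha < d_z/d_\ell \leq g$ from \eqref{req2} and \eqref{req4}; the remainder is a short substitution together with the observation that the offset increment $\tfrac{g}{c}\Delta$ is precisely what makes the completeness bound invariant while changing the soundness bound only in the harmless direction.
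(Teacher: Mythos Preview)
Your proposal is correct and takes a genuinely different, in fact cleaner, route than the paper. The paper also reduces to verifying \eqref{req1}--\eqref{req4} for a primed septuple with $d_\ell' = d_\ell + \omega$ and $d_z' = d_z + g\omega$ (where $\omega = \Delta/(c^2k)$ in your notation), but it additionally \emph{changes} $d_\alpha$ to the larger root $d_\alpha' = \tfrac{1}{2}\bigl(d_z' + \sqrt{(d_z')^2 - 4rd_\ell'}\bigr)$ of the quadratic form of \eqref{req2}. This forces the paper to check two nontrivial things: that the discriminant $(d_z')^2 - 4rd_\ell'$ stays nonnegative (handled via the consequence $gd_z \geq 4r$ of \eqref{req2} and \eqref{req4}), and that $d_\alpha' \geq d_\alpha$ so that \eqref{req1} survives. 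You instead keep $d_\alpha$ fixed, which makes \eqref{req1} automatic and reduces the soundness check to the single sign condition $g \geq r/d_\alpha$; your chain $r/d_\alpha < d_z/d_\ell \leq g$ from \eqref{req2} and \eqref{req4} is a slightly different (and more direct) combination of the same two inequalities than the paper's $gd_z \geq 4r$. Both arguments are short, but yours avoids the detour through the quadratic and the monotonicity of its root.
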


\begin{proof}
Let us write $\omega = d_{\ell} (\lceil \ell_0 \rceil - \ell_0)/\ell_0$. We prove that if the equations hold for $(d_{\ell}, d_z, d_{\delta}, d_{\alpha}, r, s, g)$, then they also hold for $(d_{\ell}', d_z', d_{\delta}, d_{\alpha}', r, s, g)$, where $d_{\ell}' = d_{\ell} + \omega, d_z' = d_z + g \omega, d_{\alpha}' = (d_z' + \sqrt{(d_z')^2 - 4 r d_{\ell}'})/2$. Since for this set of parameters we get $\ell, Z$ and $\delta$ as in \eqref{intell}, the result then follows.

First note that since $d_{\delta}, s$ and $g$ did not change, both sides of inequality \eqref{req3} remain the same and this inequality is still satisfied. For inequality \eqref{req4}, note that both sides also remained the same, since $g d_{\ell}' - d_z' = g (d_{\ell} + \omega) - (d_z + g \omega) = g d_{\ell} - d_z$. For \eqref{req2}, we rewrite this inequality as a quadratic inequality in $d_{\alpha}'$:
\begin{align}
(d_{\alpha}')^2 + (-d_z') d_{\alpha}' + r d_{\ell}' \leq 0. \label{integral1}
\end{align}
This inequality is satisfied if and only if $d_{\alpha}'$ lies between the two roots of $x^2 + (-d_z') x + r d_{\ell}' = 0$, which therefore must exist. These roots exist if and only if $(d_z')^2 - 4 r d_{\ell}' \geq 0$. Since we know that $d_z^2 - 4 r d_{\ell} \geq 0$ the inequality follows if
\begin{align*}
(d_z')^2 - 4 r d_{\ell}' = (d_z^2 - 4 r d_{\ell}) + (2 g d_z + g^2 \omega^2 - 4 r) \geq d_z^2 - 4 r d_{\ell} \geq 0.
\end{align*}
From \eqref{req2} and \eqref{req4} we know that $g (d_z^2) \geq g (4 r d_{\ell}) \geq 4 r d_z$, i.e. $g d_z \geq 4 r$. So it follows that $2 g d_z + g^2 \omega^2 \geq 4 r$, which proves the second inequality. The third inequality then follows from \eqref{req2}.

Finally for \eqref{req1}, we prove that $d_{\alpha}' \geq d_{\alpha}$, while the right hand side remains the same, so that this inequality is still satisfied. Note that $d_{\alpha}$ is also at most the largest root of \eqref{integral1}, so $d_{\alpha}' - d_{\alpha}$ is bounded by
\begin{align*}
d_{\alpha}' - d_{\alpha} \geq \frac{d_z' + \sqrt{(d_z')^2 - 4 r d_{\ell}'}}{2} - \frac{d_z + \sqrt{d_z^2 - 4 r d_{\ell}}}{2} \geq \frac{g \omega}{2} \geq 0.
\end{align*}
Here the second inequality follows from earlier calculations that $(d_z')^2 - 4 r d_{\ell}' \geq d_z^2 - 4 r d_{\ell}$. So this choice of $d_{\alpha}'$ is at least as high as $d_{\alpha}$, so inequality \eqref{req1} is satisfied. This completes the proof.
\end{proof}

\section{Proof of Lemma \ref{lem:Completeness}}
\label{sec:LemmaCompleteness}

For proving Lemma \ref{lem:Completeness} we will again closely follow the analysis of Blayer and Tassa, and make changes where necessary.

First, we write the total accusation sum of all colluders together as follows:
\begin{align*}
S = \sum_{i=1}^{\ell} \sum_{j \in C}^c S_{ji} = \sum_{i=1}^{\ell} y_i\left(x_i q_i - \frac{c - x_i}{q_i}\right) + \sum_{i=1}^{\ell} (1 - y_i)\left(\frac{c - x_i}{q_i} - x_i q_i\right).
\end{align*}
Here $x_i$ is the number of ones on the $i$th positions of all colluders, $y_i$ is the output symbol of the pirates on position $i$, and we introduced the notation $q_i = \sqrt{(1 - p_i)/p_i}$. Following the analysis from e.g.\ Blayer and Tassa, and Tardos, but using that $S_i = (1 - y_i)\left(\frac{c - x_i}{q_i} - x_i q_i\right)$ for positions $i$ where $y_i = 0$ (instead of $S_i = 0$, as with the asymmetric score function), we can bound the expectation value by
\begin{align}
\expn_{\vec{y},X,\vec{p}}\left[e^{-\beta S}\right] \leq \left(\sum_{x=0}^c \binom{c}{x} M_x\right)^{\ell}, \label{comp1}
\end{align}
where
\begin{align*}
M_x = \begin{cases}
  E_{0,x} & \text{if $x = 0$}, \\
  E_{1,x} & \text{if $x = c$}, \\
  \max(E_{0,x}, E_{1,x}) & \text{otherwise},
  \end{cases}
\end{align*}
and, for some random variable $p$ distributed according to $F$,
\begin{align*}
E_{0,x} & = \expn_p\left[p^x (1 - p)^{c - x} e^{-\beta\left(\frac{c - x}{q} - x q\right)}\right], \\
E_{1,x} & = \expn_p\left[p^x (1 - p)^{c - x} e^{-\beta\left(x q - \frac{c - x}{q}\right)}\right].
\end{align*}

Now, using $\beta = s \sqrt{\delta}/c$, we bound the exponents in $E_{0,x}$ and $E_{1,x}$ as follows.
\begin{align*}
-s = \frac{-\beta c}{\sqrt{\delta}} \leq -\beta c q \leq -\beta \left(x q - \frac{c - x}{q}\right) \leq \frac{\beta c}{q} \leq \frac{\beta c}{\sqrt{\delta}} = s.
\end{align*}
So $|\beta (x q - (c - x)/q)| \leq s$ for our choice of $\beta$. So we can use the inequality $e^w \leq 1 + w + h^{-1}(s) w^2$ which holds for all $w \leq s$, with $w = \pm \beta (x q - (c - x)/q)$, to obtain
\begin{align*}
E_{0,x} & \leq \expn_p\left[p^x (1 - p)^{c - x} \left(1 + \beta \left(x q - \frac{c - x}{q}\right) + h^{-1}(s) \beta^2 \left(x q - \frac{c - x}{q}\right)^2\right)\right], \\
E_{1,x} & \leq \expn_p\left[p^x (1 - p)^{c - x} \left(1 - \beta \left(x q - \frac{c - x}{q}\right) + h^{-1}(s) \beta^2 \left(x q - \frac{c - x}{q}\right)^2\right)\right].
\end{align*}
Introducing more notation, this can be rewritten to
\begin{align*}
E_{0,x} & \leq F_{0,x} + \beta F_{1,x} + h^{-1}(s) \beta^2 F_{2,x}, \\
E_{1,x} & \leq F_{0,x} - \beta F_{1,x} + h^{-1}(s) \beta^2 F_{2,x},
\end{align*}
where
\begin{align*}
F_{0,x} & = \expn_p\left[p^x (1 - p)^{c - x}\right], \\
F_{1,x} & = \expn_p\left[p^x (1 - p)^{c - x} \left(x q - \frac{c - x}{q}\right)\right], \\
F_{2,x} & = \expn_p\left[p^x (1 - p)^{c - x} \left(x q - \frac{c - x}{q}\right)^2\right].
\end{align*}
We first calculate $F_{1,x}$ explicitly. Writing out the expectation value and using the definition of $f(p)$ from \eqref{dist}, we get
\begin{align*}
F_{1,x} = \frac{1}{\pi - 4\delta'} \int_{\delta}^{1 - \delta} p^x (1 - p)^{c - x} \left(\frac{x}{p} - \frac{c - x}{1 - p}\right) dp
\end{align*}
The primitive of the integrand is given by $I(p) = p^x (1 - p)^{c - x}$, so we get
\begin{align}
F_{1,x} & = \frac{I\left(1 - \delta\right) - I(\delta)}{\pi - 4\delta'} = \frac{(1 - \delta)^x \delta^{c - x} - \delta^x (1 - \delta)^{c - x}}{\pi - 4\delta'}. \label{comp3}
\end{align}
For $0 < x < c$, we bound $F_{1,x}$ from above and below as
\begin{align*}
\frac{-\delta^x (1 - \delta)^{c - x}}{\pi - 4\delta'} \leq F_{1,x} \leq \frac{(1 - \delta)^x \delta^{c - x}}{\pi - 4\delta'}.
\end{align*}
Using these bounds for $M_x$, with $0 < x < c$, we get
\begin{align*}
M_x & \leq F_{0,x} + \beta\frac{\max(\delta^x (1 - \delta)^{c - x}, (1 - \delta)^x \delta^{c - x})}{\pi - 4\delta'} + h^{-1}(s) \beta^2 F_{2,x}.
\end{align*}
Since $\delta < 1 - \delta$, the maximum of the two terms is the first term when $0 < x \leq c/2$, and it is the second term when $c/2 < x < c$. Note that this bound is different from the one of Blayer and Tassa, since in their analysis they do not have this maximum over two terms, but just the first of these two terms. We cannot prove the same upper bound as Blayer and Tassa, and therefore our bound for $M_x, 0 < x < c$, is slightly weaker than Blayer and Tassa's.

For the positions where the marking assumption applies, i.e.\ $x = 0$ and $x = c$, we do not use the bounds on $F_{1,x}$, but use the exact formula from \eqref{comp3} to obtain
\begin{align*}
M_0 & \leq F_{0,0} - \beta \frac{(1 - \delta)^c - \delta^c}{\pi - 4\delta'} + h^{-1}(s) \beta^2 F_{2,0}, \\
M_c & \leq F_{0,c} - \beta \frac{(1 - \delta)^c - \delta^c}{\pi - 4\delta'} + h^{-1}(s) \beta^2 F_{2,c}.
\end{align*}
The value of $M_c$ is the same as that of Blayer and Tassa, but whereas Blayer and Tassa had $M_0 = F_0$, we get a lower upper bound on $M_0$. This is essentially the reason why with the symmetric score function we get shorter codelengths than Blayer and Tassa.

Substituting the bounds on $M_x$ in the summation over $M_x$ from \eqref{comp1} gives us
\begin{align}
\sum_{x=0}^c \binom{c}{x} M_x & \leq M_0 + M_c + \sum_{x=1}^{c-1} \binom{c}{x} M_x \notag \\
  & \leq \left(F_{0,0} - \beta \frac{(1 - \delta)^c - \delta^c}{\pi - 4\delta'} + h^{-1}(s) \beta^2 F_{2,0}\right) \notag \\
	& + \left(F_{0,c} - \beta \frac{(1 - \delta)^c - \delta^c}{\pi - 4\delta'} + h^{-1}(s) \beta^2 F_{2,c}\right) \notag \\
  & + \sum_{x=1}^{\left\lfloor c/2 \right\rfloor} \binom{c}{x} \left(F_{0,x} + \beta\frac{\delta^x (1 - \delta)^{c - x}}{\pi - 4\delta'} + h^{-1}(s) \beta^2 F_{2,x}\right) \notag  \\
	& + \sum_{x=\left\lfloor c/2 \right\rfloor + 1}^{c-1} \binom{c}{x} \left(F_{0,x} + \beta\frac{(1 - \delta)^x \delta^{c - x}}{\pi - 4\delta'} + h^{-1}(s) \beta^2 F_{2,x}\right). \notag
\end{align}
Gathering all terms with $F_{0,x}$ and $F_{2,x}$, and using the substitution $x' = c - x$ for the summation over $\left\lceil c/2 \right\rceil - 1$ terms, we get
\begin{align}
\sum_{x=0}^c \binom{c}{x} M_x & \leq \sum_{x=0}^c \binom{c}{x} F_{0,x} - \beta \frac{2 (1 - \delta)^c}{\pi - 4\delta'} + h^{-1}(s) \beta^2 \sum_{x=0}^c \binom{c}{x} F_{2,x} \notag \\
  & + \frac{\beta}{\pi - 4\delta'} \left(\delta^c + \sum_{x=1}^{\left\lfloor c/2 \right\rfloor} \binom{c}{x} \delta^x (1 - \delta)^{c - x}\right) \notag \\
	& + \frac{\beta}{\pi - 4\delta'} \left(\delta^c + \sum_{x'=1}^{\left\lceil c/2 \right\rceil - 1} \binom{c}{x'} \delta^{x'} (1 - \delta)^{c - x'}\right). \label{comp2}
\end{align}
For the summation over $F_{2,x}$, let us define a sequence of random variables $\{T_i\}_{i=1}^c$ according to $T_i = q$ with probability $p$ and $T_i = -1/q$ with probability $1 - p$. Similar to the inequalities from \eqref{sound1}, we get that $\expn_p[T_i] = 0$ and $\expn_p[T_i^2] = 1$. Also, since $T_i$ and $T_j$ are independent for $i \neq j$, we have that $\expn_p[T_i T_j] = 0$ for $i \neq j$. Therefore we can write
\begin{align*}
\expn_p\left[\left(\sum_{i=1}^c T_i\right)^2\right] = \sum_{i=1}^c \expn_p\left[T_i^2\right] + \sum_{i \neq j} \expn_p\left[T_i T_j\right] = c.
\end{align*}
But writing out the definition of the expected value, we see that the left hand side is actually the same as the summation over $F_{2,x}$, so that we get
\begin{align*}
\expn_p\left[\left(\sum_{i=1}^c T_i\right)^2\right] = \sum_{x=0}^c \binom{c}{x} p^x (1 - p)^{c-x} \left(x q - \frac{c - x}{q}\right)^2 = \sum_{x=0}^c \binom{c}{x} F_{2,x} = c.
\end{align*}
Also we trivially have that
\begin{align*}
\sum_{x=0}^c \binom{c}{x} F_{0,x} = \sum_{x=0}^c \binom{c}{x} \expn_p\left[p^x (1 - p)^{c - x}\right] = \expn_p\left[\sum_{x=0}^c \binom{c}{x} p^x (1 - p)^{c - x}\right] = 1.
\end{align*}
For the summation over $\lfloor c/2 \rfloor$ terms we use the following upper bound, which then also holds for the summation over $\lceil c/2 \rceil - 1$ terms:
\begin{align*}
\delta^c + \sum_{x=1}^{\left\lfloor c/2 \right\rfloor} \binom{c}{x} \delta^x (1 - \delta)^{c - x} \leq \sum_{x=1}^c \binom{c}{x} \delta^x (1 - \delta)^{c - x} = 1 - (1 - \delta)^c \leq \delta c.
\end{align*}
Note that this first inequality is quite sharp. In most cases $\delta \ll 1 - \delta$, so that the summation is dominated by the terms with low values of $x$. Adding the terms with $\left\lfloor c/2 \right\rfloor < x < c$ (i.e. terms with high powers of $\delta$) to the summation has an almost negligible effect on the value of the summation.

Now applying the previous results to \eqref{comp2}, and using $(1 - \delta)^c \geq 1 - \delta c$, which holds for all $c$, we get
\begin{align*}
\sum_{x=0}^c \binom{c}{x} M_x & \leq 1 - \beta \frac{2 - 4 c \delta}{\pi - 4\delta'} + h^{-1}(s) \beta^2 c.
\end{align*}
We want to prove that, for some $g > 0$,
\begin{align}
\sum_{x=0}^c \binom{c}{x} M_x & \leq 1 - \beta \frac{2 - 4 c \delta}{\pi - 4\delta'} + h^{-1}(s) \beta^2 c \leq 1 - g \beta \leq e^{-g \beta}. \label{comp4}
\end{align}
Filling in $\beta = s \sqrt{\delta}/c$ and $\delta = 1/(d_{\delta} c)$ and writing out the second inequality, this leads to the requirement that
\begin{align*}
\frac{2 - \frac{4}{d_{\delta}}}{\pi} - \frac{h^{-1}(s) s}{\sqrt{d_{\delta} c}} \geq g.
\end{align*}
This is exactly inequality \eqref{req3}, which is assumed to hold. Combining the results from Equations \eqref{comp4} and \eqref{comp2} gives us
\begin{align*}
\expn_{\vec{y},X,\vec{p}}\left[e^{-\beta S}\right] \leq \left(\sum_{x=0}^c \binom{c}{x} M_x\right)^{\ell} \leq e^{-g \beta \ell}.
\end{align*}
This completes the proof.
\qed

\section{Proof of Theorem \ref{thm:FirstOrder}}
\label{sec:AppendixFirstOrder}

We introduce parameters $K_g, K_r, K_s$, a priori depending on $c$, to enable us to write
\begin{align*}
g = \frac{2}{\pi} - K_g c^{-1/3}, \quad h(r) = K_r c^{-1/3}, \quad \frac{1}{sh^{-1}(s)} = K_s c^{-1/3}.
\end{align*}
Clearly $K_g, K_r, K_s$ are positive, and we will assume that $K_g$ and $K_r$ are $O(1)$ for $c \rightarrow \infty$. This assumption will be validated later on. Note that we do not demand this for $K_s$ (and indeed, it will turn out that $K_s \to \infty$).

Note that $r = h^{-1}(K_r c^{-1/3}) = \frac{1}{2} + \frac{1}{6} K_r c^{-1/3} + O(c^{-2/3})$, so that, with for convenience $R = \frac{r}{g}$, we have
\begin{align}
R = \frac{\pi}{4} + \left(\frac{\pi^2}{8} K_g + \frac{\pi}{12} K_r \right) c^{-1/3} + O\left(c^{-2/3}\right). \label{eq:R}
\end{align}
Next, for convenience we put $D = \sqrt{\frac{d_{\delta}}{c}}$, and then we have from (O1) that $D = D_0 c^{-1/3}$, where
\begin{align*}
D_0 = \frac{1}{2 K_g K_s} \left(1 + \sqrt{1 + \frac{16}{\pi} K_g K_s^2}\right).
\end{align*}
Note that $D_0$ is a decreasing function of $K_s$, with limiting value $\frac{2}{\sqrt{\pi}} \frac{1}{\sqrt{K_g}}$.

From (O2) we see that $d_{\alpha} = \max \left\{\frac{D}{h(r)}, x_0 \right\} $, where $x_0 = R + \sqrt{R^2 + R D \frac{\eta}{s}}$. Note that
\begin{align}
x_0 = 2 R + \frac{1}{2} D \frac{\eta}{s} + O\left(c^{-2/3}\right), \label{eq:X}
\end{align}
where we used that $\frac{\eta}{s} = o(1)$. Note that (O3) and (O4) imply $d_{\ell} = \frac{d_{\alpha}^2 + d_{\alpha} D \frac{\eta}{s}}{g (d_{\alpha} - R)}$, and that by $d_{\alpha} > R$ we have $d_{\ell} \geq \frac{2 x_0 + D \frac{\eta}{s}}{g}$, with equality if and only if $d_{\alpha} = x_0$. So in order to minimize $d_{\ell}$ we minimize $\frac{2 x_0 + D \frac{\eta}{s}}{g}$, and show that there is a solution to this with $d_{\alpha} = x_0$, which then must be the optimum. For this optimal solution, by \eqref{eq:R} and \eqref{eq:X} we get
\begin{align}
d_{\ell} = \frac{\pi^2}{2} + L_0 c^{-1/3} + O(c^{-2/3}), \textrm{ where } L_0 = \frac{\pi^3}{2} K_g + \frac{\pi^2}{6} K_r + \pi D_0 \frac{\eta}{s}. \label{eq:L}
\end{align}
To find the main terms in the optimal values for $d_{\ell}, d_z, d_{\delta}$, for the moment we neglect error terms. The fact that $d_{\alpha} = x_0$ implies that $\frac{D}{h(r)} \leq x_0$, and this is asymptotically equivalent to $\frac{D_0}{K_r} \leq \frac{\pi}{2}$. This can be expanded into $1 + \sqrt{1 + \frac{16}{\pi} K_g K_s^2} \leq \pi K_g K_r K_s$, and this leads to $(\pi^3 K_g K_r^2 - 16) K_s \geq 2 \pi^2 K_r$, which actually is two conditions:
\begin{align}
K_g K_r^2 > \frac{16}{\pi^3} = 0.51602\ldots, \quad K_s \geq \frac{2 \pi^2 K_r}{\pi^3 K_g K_r^2 - 16}. \label{eq:I}
\end{align}
This shows that it is impossible to choose both $K_g$ and $K_r$ close to $ 0 $, and that it is certainly possible to choose them $O(1)$ as $c \to \infty$. Note that optimizing $\frac{\eta}{s}$ implies taking $s$ as large as possible, but this means taking $K_s$ as small as possible, which is limited by the above condition. Indeed, in minimizing $L_0$ we would like to minimize $K_g$ and $K_r$, leading to growing $K_s$, while also $s$
preferably keeps growing. We will see that this is possible.

In optimizing $L_0$, to find the main term we also neglect for the moment the term $\pi D_0 \frac{\eta}{s}$, as it also tends to $0$. So we optimize $L_0' = \frac{\pi^3}{2} K_g + \frac{\pi^2}{6} K_r$ under the constraint $K_g K_r^2 > \frac{16}{\pi^3}$. The minimal value for $L_0'$ is reached for $K_g \to \frac{\gamma}{\pi} \approx 0.11325, K_r \to 6 \gamma = 2.1346$, where $\gamma = (\frac{2}{3\pi})^{2/3} \approx 0.35577$ is a convenience constant. In this case $K_g K_r^2 \to \frac{16}{\pi^3}$, so $K_s \to \infty$, and $D_0 \to \frac{2}{\sqrt{\pi}} \frac{1}{\sqrt{K_g}} \to 3 \pi \gamma \approx 3.3531$. It follows that $L_0' \to \frac{3\pi^2}{2} \gamma \approx 5.2670$.

Let us next be more careful, and not throw away the term $\pi D_0 \frac{\eta}{s}$ and the error terms. $L_0$ as in \eqref{eq:L} is a priori a function of $K_g, K_r$ and $s$. We can take for $K_s$ its exact optimal value according to \eqref{eq:I}, viz.
\begin{align}
K_s = \frac{2 \pi^2 K_r}{\pi^3 K_g K_r^2 - 16}, \label{eq:Ks}
\end{align}
so that $D_0 = \frac{\pi}{2} K_r$. Note that \eqref{eq:Ks} allows us to eliminate from $L_0$ the variable $K_g$. This yields
\begin{align*}
L_0 = \frac{\pi^2}{6} \left(1 + 3\frac{\eta}{s}\right) K_r + \pi^2 \frac{1}{K_r K_s} + 8 \frac{1}{K_r^2}.
\end{align*}

We now minimize $L_0$ by setting the partial derivatives w.r.t.\ $s$ and $K_r$ to $0$. Indeed, $\frac{\partial L_0}{\partial K_r} = \frac{\pi^2}{6}(1 + 3\frac{\eta}{s}) - \pi^2 \frac{1}{K_r^2 K_s} - 16 \frac{1}{K_r^3}$, and this being $0$ implies
\begin{align}
\frac{\pi^2}{6}\left(1 + 3\frac{\eta}{s}\right) K_r^2 - 16 \frac{1}{K_r} = \pi^2 \frac{1}{K_s}. \label{eq:B1}
\end{align}
Further, by $\frac{1}{K_s^2} \frac{dK_s}{ds} = -\frac{(s - 1) e^s + 1}{s^2} c^{-1/3}$ we find $\frac{\partial L_0}{\partial s} = -\frac{\pi^2}{2} \frac{\eta}{s^2} K_r + \pi^2 \frac{1}{K_r} \frac{(s - 1) e^s + 1}{s^2} c^{-1/3}$, and this being $0$ implies
\begin{align}
K_r^2 = \frac{2}{\eta}((s - 1) e^s + 1) c^{-1/3}. \label{eq:B2}
\end{align}

From \eqref{eq:B1} and \eqref{eq:B2} we eliminate $K_r$, and thus obtain an equation in $s$ only, viz.
\begin{align*}
\left(1 + 3\frac{\eta}{s}\right) \frac{1}{\eta^{3/2}} ((s - 1) e^s + 1)^{3/2} - \frac{24\sqrt{2}}{\pi^2} c^{1/2} = 3 \frac{1}{\eta^{1/2}} \frac{e^s - 1 - s}{s} ((s - 1) e^s + 1)^{1/2}.
\end{align*}
The first term on the left hand side is $(\frac{s e^s}{\eta})^{3/2} (1 + O(\frac{1}{s}))$, and the right hand side is $\frac{3(e^s)^{3/2}}{(s \eta)^{1/2}} (1 + O(\frac{1}{s}))$, and as $\eta < 1$ and $s \to \infty$ the right hand side clearly is smaller, so vanishes in the $O(\frac{1}{s})$. So we find $(\frac{s e^s}{\eta})^{3/2} (1 + O(\frac{1}{s})) = \frac{24\sqrt{2}}{\pi^2} c^{1/2}$, and this yields
\begin{align*}
s e^s = \left(\frac{8}{\pi^2\gamma} \eta + O\left(\frac{1}{\log c}\right)\right) c^{1/3}.
\end{align*}
In turn this implies
\begin{align}
s = \frac{1}{3} \log c - \log\log c + \log \eta + O(1), \quad \frac{1}{s} = \frac{3}{\log c} \left(1 + O\left(\frac{\log\log c}{\log c}\right) \right), \label{eq:Ks1}
\end{align}
and
\begin{align}
K_s = \frac{\pi^2 \gamma}{72 \eta} \log^2 c \left(1 + O\left(\frac{1}{\log c}\right)\right). \label{eq:Ks2}
\end{align}
Indeed we find that $K_s$ and $s$ both tend to $\infty$.

To get the proper value for $K_r$ we turn to \eqref{eq:B1}, and introduce $\theta$ such that $K_r = 6 \gamma + \theta$, so that $\theta$ will tend to $0$. Then \eqref{eq:B1} becomes a cubic equation in $\theta$:
\begin{align}
\theta^3 + 18 \gamma \theta^2 + \left(108 \gamma^2 - \frac{6}{\left(1 + 3\frac{\eta}{s}\right) K_s} \right) \theta + \left(\frac{\frac{288}{\pi^2}\frac{\eta}{s}}{1 + 3\frac{\eta}{s}} - \frac{36\gamma}{\left(1 + 3\frac{\eta}{s}\right)K_s}\right) = 0. \label{eq:Kt}
\end{align}
When $s \to \infty$ and $K_s \to \infty$, this ultimately becomes $\theta (\theta^2 + 18 \gamma \theta + 108 \gamma^2) = 0$, with the quadratic term being positive definite, showing that \eqref{eq:Kt} for finite large $s$ has exactly one real solution, which will be close to $0$.
For this solution we have, using \eqref{eq:Ks1}, \eqref{eq:Ks2},
\begin{align*}
\left(108 \gamma^2 + O\left(\frac{1}{\log^2 c}\right) \right) \theta + O(\theta^2) = -\frac{288}{\pi^2} \frac{\eta}{s} \left(1 + O\left(\frac{1}{\log c}\right)\right),
\end{align*}
hence
\begin{align*}
K_r = 6 \gamma \left(1 - \frac{\eta}{s} \left(1 + O\left(\frac{1}{\log c}\right)\right)\right), \quad K_g = \frac{\gamma}{\pi} \left(1 + 2 \frac{\eta}{s} \left(1 + O\left(\frac{1}{\log c}\right)\right)\right).
\end{align*}
Putting everything together, using \eqref{eq:Ks1}, we find
\begin{align*}
L_0 = \frac{3}{2} \pi^2 \gamma \left(1 + 6 \eta \frac{1}{\log c} (1 + o(1)) \right).
\end{align*}
The result now easily follows.
\qed

\end{document}